\documentclass[11pt]{article}
\usepackage{authblk}
\usepackage{amssymb}
\usepackage{amsthm}
\usepackage{mathtools}
\usepackage{amsfonts}
\usepackage{amsmath}

\title{Effective Continued Fraction Dimension versus Effective
	Hausdorff Dimension of Reals}  

\author[1]{Satyadev Nandakumar}
\author[1]{Akhil S}
\author[1]{Prateek Vishnoi}
\affil[1]{
	Department of Computer Science and Engineering\\
	Indian Institute of Technology Kanpur,
	Kanpur, Uttar Pradesh, India.
}
\affil[]{\{satyadev,akhis,pratvish\}@cse.iitk.ac.in}

\usepackage{amsmath,amsthm,amssymb}
\usepackage[margin=1in]{geometry}
\usepackage[colorlinks=true]{hyperref}

\usepackage{amsfonts}
\usepackage{amsmath} 
\usepackage{amssymb}
\usepackage{enumerate}


\newcommand{\PROOF}{\begin{proof}}

\newcommand{\QED}{\end{proof}}


\newcommand{\prefix}{\sqsubseteq}

\newcommand{\restr}{\upharpoonright}

\newcommand{\N}{\mathbb{N}}


\newcommand{\cdim}{\mathrm{cdim}}

\renewcommand{\dim}{{\mathrm{dim}}}

\newcommand{\D}{{\mathcal{D}}}




\newcommand{\F}{{\mathcal{F}}}













\newcommand{\lcro}[2]{[#1,#2)}

\newtheorem{theorem}{Theorem} 
\newtheorem{lemma}{Lemma}

\newtheorem{definition}{Definition}
\newtheorem*{theorem*}{Theorem}

\newcommand{\Rplus}{[0,\infty)}
\renewcommand{\F}{\mathcal{F}}

\begin{document}

\maketitle

\begin{abstract}
	We establish that constructive continued fraction dimension
	originally defined using $s$-gales \cite{Vishnoi22} is robust, but
	surprisingly, that the effective continued fraction dimension and
	effective (base-$b$) Hausdorff dimension of the same real can be
	unequal in general.
	
	We initially provide an equivalent characterization of continued
	fraction dimension using Kolmogorov complexity. In the process, we
	construct an optimal lower semi-computable $s$-gale for continued
	fractions. We also prove new bounds on the Lebesgue measure of
	continued fraction cylinders, which may be of independent interest.
	
	We apply these bounds to reveal an unexpected behavior of
	continued fraction dimension. It is known that feasible dimension is
	invariant with respect to base conversion \cite{HitchMayor13}. We
	also know that Martin-L\"of randomness and computable randomness are
	invariant not only with respect to base conversion, but also with
	respect to the continued fraction representation
	\cite{Vishnoi22}. In contrast, for any $0 < \varepsilon < 0.5$, we
	prove the existence of a real whose effective Hausdorff dimension is
	less than $\varepsilon$, but whose effective continued fraction
	dimension is greater than or equal to $0.5$. This phenomenon is
	related to the ``non-faithfulness'' of certain families of covers,
	investigated by Peres and Torbin \cite{PeresTorbin2013} and by
	Albeverio, Ivanenko, Lebid and Torbin \cite{Albeverio2020}. 
	
	We also establish that for any real, the constructive Hausdorff
	dimension is at most its effective continued fraction dimension.  
\end{abstract}

\section{Introduction}
\label{sec:introduction}
The concept of an individual random sequence, first defined by
Martin-L\"of using constructive measure \cite{MartinLoef1966}, is
well-established and mathematically robust - very different approaches
towards the definition identify precisely the same sequences as
random. These include Kolmogorov incompressibility (Levin
\cite{Levin1973a}, Chaitin \cite{Chaitin1975}) and unpredictability by
martingales \cite{Schnorr1971a}. While the theory of Martin-L\"of
randomness \emph{classifies} sequences into random and non-random, it
does not \emph{quantify} the information rate in a non-random
sequence. Lutz effectivized the classical notions of Hausdorff and
packing dimensions \cite{Lutz03}, surprisingly extending it to
individual infinite binary sequences \cite{Lutz2003}, yielding a
notion of information density in sequences. This definition also has
several equivalent definitions in terms of Kolmogorov compression
rates \cite{Mayordomo02}, unpredictability by $s$-gales, and using
covers \cite{Lutz03}, \cite{Lutz2003}. These definitions have led
to a rich variety of applications in various domains of computability
and complexity theory (see for example, Downey and Hirschfeldt
\cite{Downey10}, Nies \cite{Nies2009}).

Recently, settings more general than the Cantor space of infinite
binary (or in general, infinite sequences from a finite alphabet) have
been studied by Lutz and Mayordomo \cite{Lutz2008a}, and Mayordomo
\cite{Mayordomo2012}, \cite{Mayordomo2018}. Prominent among them is
Mayordomo's definition of effective Hausdorff dimension for a very
general class of metric spaces \cite{Mayordomo2012},
\cite{Mayordomo2018}. Nandakumar and Vishnoi \cite{Vishnoi22} and
Vishnoi \cite{Vishnoi2023normality} define the notion of effective
dimension of continued fractions, which involves a countably infinite
alphabet, and is thus a setting which cannot be studied using
Mayordomo's framework. This latter setting is interesting
topologically since the space of continued fractions is non-compact,
and interesting measure-theoretically since the natural shift
invariant measure, the Gauss measure, is a non-product measure.

Nandakumar and Vishnoi \cite{Vishnoi22} use the notion of an $s$-gale
on the space of continued fractions to define effective dimension.
Vishnoi \cite{Vishnoi2023normality} introduced the notion of
Kolmogorov complexity of finite continued fraction strings using a one
to one binary encoding. Vishnoi \cite{Vishnoi2023normality} also shows
that the notion of Kolmogorov complexity is invariant under computable
1-1 encodings, upto an additive constant.

In this work, we first establish the mathematical robustness of the
notion of effective dimension, by proving an equivalent
characterization using Kolmogorov complexity of continued fractions.
The characterization achieves the necessary equivalence by choosing a
binary encoding of continued fractions which has a compelling
geometric intuition, and then applying Mayordomo's characterization of
effective (binary) Hausdorff dimension using Kolmogorov complexity
\cite{Mayordomo02} . In the process, analogous to the notion of an
optimal constructive supergale on the Cantor space defined by Lutz
\cite{Lutz2003}, we provide the construction of a lower
semi-computable $s$-gale that is optimal for continued fractions. We
also prove new bounds on the Lebesgue measure of continued fraction
cylinders using the digits of the continued fraction expansion, a
result which may be of independent interest.

The topological and measure-theoretic intricacies involved in this
setting imply that some, but not all, ``natural'' properties of
randomness and dimension carry over from the binary setting. For
example, while Martin-L\"of and computable randomness are invariant
with respect to the conversion between the base-$b$ and continued
fraction expansion of the same real \cite{Nandakumar2008},
\cite{Vishnoi22}, Vandehey \cite{Vandehey2016} and Scheerer
\cite{Scheerer2017continued} show that other notions of randomness
like absolute normality and normality for continued fractions are not
identical.

Staiger \cite{Staiger02} showed that the
Kolmogorov complexity of a base $b$ expansion of a real
$\alpha, 0\le\alpha\le 1,$ is
independent of the chosen base $b$. Aligning with this,
Hitchcock and Mayordomo \cite{HitchMayor13} establish that feasible
dimension of a real is the same when converting between one base to
another. 
Hitherto, it was unknown whether effective dimension is
invariant with respect to conversion between base-$b$ and continued
fraction representations. Since we can convert between the
representations efficiently, it is possible that these are equal. We
show this is true in one direction, that the effective base $b$
dimension is a lower bound for effective continued fraction dimension.

However, using the technique of diagonalization against the optimal
lower semicomputable continued fraction $s$-gale and using set
covering techniques used in recent works by Peres and Torbin
\cite{PeresTorbin2013}, Albeverio, Ivanenko, Lebid and Torbin
\cite{Albeverio2020} and Albeverio, Kondratiev, Nikiforov and Torbin
\cite{Albeverio2017} to show the ``non-faithfulness'' of certain
families of covers, we show that the reverse direction does not hold,
in general. We prove the following result: for every $0 < \varepsilon
< 0.5$, there is a real whose effective (binary) Hausdorff dimension
is less than $\varepsilon$ while its effective continued fraction
dimension is at least 0.5. By the result of Hitchcock and Mayordomo
\cite{HitchMayor13}, this also implies that the effective base-$b$
dimension of this real is less than $\varepsilon$ in every base-$b$,
$b \ge 2$. Thus, surprisingly, there is a sharp gap between the
effective (base-$b$) dimension of a real and its effective continued
fraction dimension, highlighting another significant difference in
this setting.

\section{Preliminaries}

We denote the binary alphabet by $\Sigma$. The set of strings of a
particular length $n$ is denoted $\Sigma^n$. The set of all finite
binary strings is denoted $\Sigma^*$ and infinite binary sequences is
denoted $\Sigma^\infty$. For a binary string $v \in \Sigma^n \setminus
\{0^n \cup 1^n\}$, $v-1$ denotes the string occurring just before $v$
lexicographically, and $v+1$ the string occurring just after $v$
lexicographically. We use $\N$ to denote the set of positive integers.
The set of finite continued fractions is denoted $\N^*$ and the set of
all infinite continued fractions, as $\N^\infty$. 

We adopt the
notation $[a_1, a_2, \dots]$ for the continued fraction
\begin{align*}
	\cfrac{1}{a_1+\cfrac{1}{a_2+\cdots}}
\end{align*}
and similarly, $[a_1, a_2, \dots, a_n]$ for finite continued
fractions.

If a finite binary string $x$ is a prefix of a finite string $z$ or an
infinite binary sequence $Z$, then we denote this by $x \sqsubseteq z$
or $x \sqsubseteq Z$ respectively.  If $x$ is a proper
prefix of a finite string $z$, we denote it by $x \sqsubset z$. We adopt the same notation for
denoting that a finite continued fraction $v$ is a prefix of another
continued fraction. For a $v \in \N^*$, the \emph{cylinder set} of
$v$, denoted $C_v$, is defined by $C_v=\{ Y \in \N^\infty \mid v
\sqsubset Y\}$. For a $w \in \Sigma^*$, $C_w$ is defined similarly.
For a continued fraction string $v = [a_1 \dots a_n]$, $P(v)$ denotes
the string $[a_1, \dots a_{n-1}]$. $\lambda$ denotes the empty string and we define $P(\lambda) = \lambda$. 

For $v \in \N^*$, $\mu(v)$ refers to the Lebesgue measure of the
continued fraction cylinder $C_v$. $\gamma(v)$ refers to the Gauss
measure of the continued fraction cylinder $C_v$, defined by
$\gamma(v) = \int_{C_v} \frac{1}{1+x}~dx$. We use the same notation
for a binary cylinder $w \in \Sigma^*$. It is well-known that the
Gauss measure is absolutely continuous with respect to the Lebesgue
measure, and is invariant with respect to the left-shift
transformation on continued fractions (see for example,
\cite{Dajani2002}, or \cite{Einsiedler2010}). Wherever there is no scope for confusion, for a $v \in \N^*$, we use $\mu(v)$ and $\gamma(v)$ to represent $\mu(C_v)$ and $\gamma(C_v)$ respectively. The same holds for a $v \in \Sigma^*$. We also use the notation $\mu^s(v)$ and $\gamma^s(v)$ to denote $(\mu(v))^s$ and $(\gamma(v))^s$ respectively. For a continued fraction
string $v = [a_1, \dots, a_n]$, we call $n$ as the rank of $v$, and we denote it using $rank(v)$.
$[v,i]$ denotes the continued fraction $[a_1 , \dots a_n , i]$. For an infinite continued
fraction string $Y = [a_1, a_2, \dots]$, $Y \upharpoonleft n$ denotes
the continued fraction string corresponding to the first n entries of
$Y$, that is $Y \upharpoonleft n = [a_1 , a_2 \dots a_n]$. For $k \in
\N$, $\N^{\leq k}$ refers to the set of continued fraction strings
having rank less than or equal to $k$. All logarithms in the work have
base 2, unless specified otherwise. For any sets $A$ and $B$, $A\Delta B$ denotes the symmetric set
difference operator, defined by $(A \setminus B) \cup (B \setminus
A)$. In this work, for ease of notation, $Y \in \N^*$ denotes an
infinite continued fraction and $X \in \Sigma^\infty$ denotes an infinite
binary sequence.

\subsection{Constructive dimension of binary sequences}

Lutz \cite{Lutz2003} defines the notion of effective (equivalently,
constructive) dimension of an individual infinite binary sequence
using the notion of the success of $s$-gales.
\begin{definition}[Lutz \cite{Lutz2003}]
	For $s \in \Rplus$, a binary $s$-gale is a function $d: \Sigma^{*}
	\to \Rplus$ such that $d(\lambda) < \infty$ and for all $w \in
	\Sigma^*$, $ d(w)[\mu (C_w)]^s = \sum_{i \in \{0,1\}} d(wi)
	[\mu(C_{wi})]^s.$
	
	The \emph{success set} of $d$ is $S^\infty (d) = \left\{ X \in
	\N^\infty \mid \limsup \limits_{n\to\infty} d(X \restr n) =
	\infty\right\}$.
	
	For $\mathcal{F} \subseteq [0,1]$, $\mathcal{G}(\mathcal{F})$
	denotes the set of all $s \in \Rplus$ such that there exists a lower
	semicomputable binary $s$-gale $d$ with $\mathcal{F} \subseteq S^\infty (d)$.
	
	The \emph{constructive dimension} or \emph{effective Hausdorff
		dimension} of $\mathcal{F} \subseteq [0,1]$ is $\cdim(\mathcal{F}) =
	\inf \mathcal{G}(\mathcal{F})$ and the constructive dimension of a
	sequence $X \in \Sigma^\infty$ is $\cdim(X) = \cdim(\{X\})$.
\end{definition}

\section{Effective Continued Fraction Dimension using $s$-gales}
\label{sec:sgaleDimension}

Nandakumar and Vishnoi \cite{Vishnoi22} formulate the notion of
effective dimension of continued fractions using the notion of lower
semicomputable continued fraction $s$-gales. Whereas a binary $s$-gale
bets on the digits of the binary expansion of a number, a continued
fraction $s$-gales places bets on the digits of its continued fraction
expansion.

\begin{definition}[Nandakumar, Vishnoi \cite{Vishnoi22}]
	For $s \in \Rplus$, a continued fraction $s$-gale is a function $d:
	\mathbb{N}^{*} \to \Rplus$ such that $d(\lambda) < \infty$ and for all
	$w \in \mathbb{N}^*$, the following holds.
	\begin{align*}
		d(w)[\gamma (C_w)]^s = \sum_{i \in \mathbb{N}} d(wi)
		[\gamma(C_{wi})]^s.
	\end{align*}
	The \emph{success set} of $d$ is $S^\infty (d) = \left\{ Y \in
	\N^\infty \mid \limsup \limits_{n\to\infty} d(Y \restr n) =
	\infty\right\}$.
\end{definition}

In this paper, we deal with the notion of effective or equivalently,
constructive dimension. In order to effectivize the notion
of $s$-gales, we require them to be \emph{lower semicomputable}.

\begin{definition} 
	A function $d : \mathbb{N^*} \longrightarrow \Rplus$ is called
	\emph{lower semicomputable} if there exists a total computable
	function $\hat{d} : \mathbb{N^*} \times \mathbb{N} \longrightarrow
	\mathbb{Q} \cap \Rplus $ such that the following two conditions
	hold.
	\begin{itemize}
		\item \textbf{Monotonicity} : For all $w \in \mathbb{N^*}$ and
		for all $n \in \mathbb{N}$, we have $ \hat{d}(w,n) \leq
		\hat{d}(w,n+1) \leq d(w)$.
		\item \textbf{Convergence} : For all $w \in \mathbb{N^*}$,
		$\lim\limits_{n\to\infty} \hat{d}(w,n) = d(w)$.
	\end{itemize}
\end{definition}

For $\mathcal{F} \subseteq [0,1]$, $\mathcal{G}_{CF}(\mathcal{F})$
denotes the set of all $s \in \Rplus$ such that there exists a lower
semicomputable continued fraction $s$-gale $d$ with $\mathcal{F}
\subseteq S^\infty (d)$.

\begin{definition}[Nandakumar, Vishnoi \cite{Vishnoi22}]
	The \emph{effective continued fraction dimension} of $\mathcal{F}
	\subseteq [0,1]$ is
	\begin{align*}
		\cdim_{CF}(\mathcal{F}) = \inf \mathcal{G}_{CF}(\mathcal{F}).
	\end{align*}
	The \emph{effective continued fraction dimension} of a sequence $Y \in
	\N^\infty$ is defined by $\cdim_{CF}(\{Y\})$, the effective continued fraction dimension
	of the singleton set containing $Y$.
\end{definition}

\subsection{Conversion of binary $s$-gales into continued fraction
	$s$-gales} In this subsection, from a continued fraction $s'$-gale
$d:\N^* \to \Rplus$, for any $s> s'$, we construct a binary $s$-gale
$h:\Sigma^* \to \Rplus$ which succeeds on all the reals on which $d$
succeeds. The construction proceeds in multiple steps. We first
mention some technical lemmas which we use in the proof.

The following lemma is an easy consequence of the fact that the Gauss
measure is absolutely continuous with respect to the Lebesgue measure
(see for example, Nandakumar and Vishnoi \cite{Vishnoi22}).

\begin{lemma}\label{lem:lebesgue_gauss} 
	For any interval $B \subseteq (0,1)$, we have
	$$\frac{1}{2 \ln 2} \mu(B) \le \gamma(B) \le \frac{1}{\ln 2} \mu(B).$$
\end{lemma}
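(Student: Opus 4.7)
My plan is to exploit the explicit formula for the density of the Gauss measure with respect to Lebesgue measure, and then to bound this density pointwise on the unit interval. The entire argument reduces to a one-line integral estimate, and there is no real obstacle beyond noting that the Gauss density is bounded above and below by positive constants on $(0,1)$.

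Concretely, I would first recall that $\gamma$ is absolutely continuous with respect to $\mu$ on $(0,1)$, with Radon--Nikodym derivative
\begin{align*}
\frac{d\gamma}{d\mu}(x) \;=\; \frac{1}{(1+x)\ln 2}.
\end{align*}
This is immediate from the definition of $\gamma$ in the preliminaries, once we normalize so that $\gamma$ is a probability measure on $(0,1)$ (equivalently, the factor $1/\ln 2$ is folded into the definition so that $\gamma((0,1)) = 1$). Next, for any $x \in (0,1)$ we have the elementary bound $1 < 1+x < 2$, which immediately gives the pointwise estimate
\begin{align*}
\frac{1}{2\ln 2} \;<\; \frac{1}{(1+x)\ln 2} \;<\; \frac{1}{\ln 2}.
\end{align*}

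Finally, since $B \subseteq (0,1)$, integrating this pointwise inequality with respect to Lebesgue measure over $B$ yields
\begin{align*}
\frac{1}{2\ln 2}\,\mu(B) \;\le\; \int_B \frac{1}{(1+x)\ln 2}\,d\mu(x) \;=\; \gamma(B) \;\le\; \frac{1}{\ln 2}\,\mu(B),
\end{align*}
which is exactly the claim. The lemma is really just the statement that the Gauss density is bounded between two positive constants on $(0,1)$; the two constants are read off from the endpoints of the interval, so there is nothing to optimize or estimate more carefully. The only mild subtlety is that the inequality is stated with $\le$ while the pointwise bound is strict, but this is harmless because integrating a strict pointwise bound over a set of positive measure preserves the strict inequality, and the inequality is trivially $0 \le 0 \le 0$ on null sets.
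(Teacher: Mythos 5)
Your proof is correct and is exactly the argument the paper has in mind (the lemma is stated there as an easy consequence of the density bound, with a citation rather than a written-out proof): bound the Gauss density $\frac{1}{(1+x)\ln 2}$ between $\frac{1}{2\ln 2}$ and $\frac{1}{\ln 2}$ on $(0,1)$ and integrate over $B$. You also correctly note the normalization issue — the preliminaries write $\gamma(v)=\int_{C_v}\frac{1}{1+x}\,dx$ without the factor $\frac{1}{\ln 2}$, but the stated constants (and the later use of $\gamma(\lambda)=1$) require the normalized Gauss probability measure, which is the reading you adopt.
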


In the construction that follows, we formulate betting strategies on
binary cylinders based on continued fraction cylinders. In order to do
this conversion, we require the following bounds on the relationships
between the lengths of continued fraction cylinders and binary
cylinders.

\begin{lemma} [Nandakumar, Vishnoi
	\cite{Vishnoi22}] \label{lem:vishnoiMeasure} For any $0 \leq a < b
	\leq 1$, let $\left[\frac{m}{2^k}, 
	\frac{m+1}{2^k}\right)$, where $0 \le m \le 2^k-1$ be one of the largest dyadic intervals which
	is a subset of $[a,b)$, then $\frac{1}{2^k} \geq \frac{1}{4} (b-a)$.
\end{lemma}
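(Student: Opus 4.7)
The plan is to argue by contrapositive via a simple sub-lemma about dyadic partitions: for any $j \ge 0$, if a half-open subinterval $[a,b) \subseteq [0,1)$ has length $b - a \ge 2 \cdot 2^{-j}$, then $[a,b)$ contains some full dyadic interval of the form $[m/2^j, (m+1)/2^j)$.

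To see the sub-lemma, note that the first multiple of $2^{-j}$ lying at or above $a$ is $m/2^j$ with $m = \lceil a \cdot 2^j \rceil$, and this point satisfies $a \le m/2^j \le a + 2^{-j}$. Therefore $(m+1)/2^j \le a + 2 \cdot 2^{-j} \le b$, so the dyadic interval $[m/2^j, (m+1)/2^j)$ is a subset of $[a,b)$. Equivalently, if no dyadic interval of length $2^{-j}$ fits inside $[a,b)$, then $b - a < 2 \cdot 2^{-j}$.

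Now I would apply this with $j = k-1$, where $k$ is as in the statement (so $2^{-k}$ is the length of a largest dyadic subinterval of $[a,b)$). By maximality of $k$, no dyadic interval of length $2^{-(k-1)}$ (which would be strictly larger) is contained in $[a,b)$. The contrapositive of the sub-lemma then gives $b - a < 2 \cdot 2^{-(k-1)} = 4 \cdot 2^{-k}$, i.e., $2^{-k} > (b-a)/4$, which implies the desired bound $1/2^k \ge (b-a)/4$. The trivial edge case $k = 0$, where $[a,b) \subseteq [0,1)$ itself is considered as the dyadic ``interval'' of length $1$, is handled directly since $1 \ge (b-a)/4$ automatically for $b-a \le 1$.

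There is essentially no significant obstacle here; the result is an elementary fact about dyadic covers, and the entire argument reduces to locating the first dyadic grid point above $a$ and checking that two dyadic steps still fit inside $[a,b)$. The only bookkeeping to be careful about is ensuring that we use the half-open convention consistently so that the computed dyadic interval is genuinely a subset (and not merely contained up to an endpoint) of $[a,b)$, and verifying that no off-by-one issue arises at the boundary $b = 1$.
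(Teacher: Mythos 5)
Your proof is correct. Note, though, that the paper itself gives no proof of this lemma: it is stated as imported from Nandakumar and Vishnoi \cite{Vishnoi22}, so there is no in-paper argument to compare against line by line. Your contrapositive grid argument is a complete and sound elementary proof: the sub-lemma (if $b-a \ge 2\cdot 2^{-j}$ then some interval $\left[\frac{m}{2^j},\frac{m+1}{2^j}\right)$ with $m=\lceil a2^j\rceil$ fits inside $[a,b)$) is verified correctly, the appeal to maximality of the chosen dyadic subinterval to rule out any subinterval at level $k-1$ is legitimate, and the $k=0$ edge case is handled. It is also worth observing that your argument is the mirror image of the proof the paper does include for the adjacent covering lemma (Falconer's Lemma \ref{lem:binary_covering_cf}), which fixes the scale $j=\lfloor-\log_2(b-a)\rfloor$ and counts how many dyadic grid points of that scale can lie in $(a,b)$; your version instead fixes the maximal level $k$ and shows two grid steps at level $k-1$ cannot fit, which is arguably cleaner for the lower-bound direction since it uses maximality directly rather than choosing a scale up front. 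One small point of care you already flag and which is indeed the only delicate spot: the half-open convention guarantees genuine containment $\left[\frac{m}{2^j},\frac{m+1}{2^j}\right)\subseteq[a,b)$ when the right endpoint satisfies $\frac{m+1}{2^j}\le b$, including at $b=1$, and the bound $\frac{m+1}{2^j}\le b\le 1$ automatically keeps $m\le 2^j-1$, so the constructed interval is a valid dyadic subinterval of $[0,1)$.
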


\begin{lemma} [Falconer \cite{Falc03}]
	\label{lem:binary_covering_cf} 
	For any $0 \le a < b \le 1$, let $\left[\frac{m}{2^k},
	\frac{m+1}{2^k}\right)$ , $\left[\frac{m+1}{2^k},
	\frac{m+2}{2^k}\right)$, where $0 \le m \leq 2^k - 2$, be the
	smallest consecutive dyadic intervals whose union covers
	$[a,b)$. Then $\frac{1}{2^k} \le 2(b-a)$.
\end{lemma}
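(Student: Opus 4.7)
The plan is to interpret ``smallest consecutive dyadic intervals'' as the pair of shortest length (equivalently, the largest level $k$) for which a pair of consecutive level-$k$ intervals still covers $[a,b)$, and then to argue the desired bound by contradiction against this maximality.

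First, I would verify that such a maximal level exists. The level-$1$ pair $[0,1/2), [1/2,1)$ always covers any $[a,b) \subseteq [0,1)$, so the set of valid levels is nonempty; and it is bounded above because two consecutive intervals of length $2^{-k}$ have combined length $2^{-k+1}$, which must be at least $b-a > 0$ for a cover to exist. Let $k^*$ denote the largest valid level, and let $m$ be the offset provided by the lemma statement at this level.

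Next, I would suppose for contradiction that $2^{-k^*} > 2(b-a)$, equivalently $b - a < 2^{-(k^*+1)}$. Setting $m' = \lfloor a \cdot 2^{k^*+1} \rfloor$ places $a$ in the half-open interval $[m'/2^{k^*+1}, (m'+1)/2^{k^*+1})$. Since $b = a + (b-a) < a + 2^{-(k^*+1)} < (m'+2)/2^{k^*+1}$, it follows that $[a,b) \subseteq [m'/2^{k^*+1}, (m'+2)/2^{k^*+1})$, which is exactly the union of two consecutive dyadic intervals at level $k^*+1$. This contradicts the maximality of $k^*$, so $2^{-k^*} \le 2(b-a)$ as required.

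The main obstacle, as I see it, is less the inequality itself than properly pinning down the meaning of ``smallest'' in the statement and making the maximality step airtight. The alignment argument at level $k^*+1$ is a routine floor-function calculation, since the hypothesis $b-a < 2^{-(k^*+1)}$ forces $[a,b)$ to straddle at most one dyadic boundary at that level, so no further combinatorial subtlety is expected.
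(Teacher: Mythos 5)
Your argument is correct in substance and reaches the bound by a slightly different route than the paper. The paper argues directly: it sets $j = \lfloor -\log_2(b-a)\rfloor$, observes that $(a,b)$ contains at most two dyadic rationals of the form $m/2^j$, so that three consecutive level-$j$ intervals, and hence two consecutive intervals one level coarser, cover $[a,b)$, and then uses the minimality of $k$ in the statement to conclude $2^{-k}\le 2(b-a)$. You instead take the contrapositive: assuming $b-a < 2^{-(k^*+1)}$, you exhibit a covering pair one level finer via $m' = \lfloor a\cdot 2^{k^*+1}\rfloor$, contradicting maximality of $k^*$. Both proofs rest on the same counting fact (an interval shorter than the dyadic mesh straddles at most one mesh point), but yours avoids the merge of three fine intervals into two coarser consecutive ones, and your explicit check that the extremal level exists is a point the paper leaves implicit. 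One boundary case you should patch: if $a \ge 1 - 2^{-(k^*+1)}$ then $m' = 2^{k^*+1}-1$, and the pair with offset $m'$ is not admissible under the lemma's constraint $0\le m\le 2^{k}-2$, since its second interval exits $[0,1]$; however, in that case $b \le 1 = (m'+1)/2^{k^*+1}$, so $[a,b)$ already lies in the single interval $\left[\frac{m'}{2^{k^*+1}},\frac{m'+1}{2^{k^*+1}}\right)$, and the admissible pair with offset $m'-1$ covers it, so the contradiction still goes through.
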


\begin{proof}	
	Let $j = \lfloor - \log_2(b-a) \rfloor $. Since $(b-a)/2 < 2^{-j}$, it
	follows that at most two dyadic rationals of the form $m/2^j$, $0 \le
	m < 2^j$ is in $(a,b)$. Thus, three dyadic intervals of length
	$\frac{1}{2^j}$ cover the interval $[a,b]$. Hence two dyadic intervals
	of length $\frac{2}{ 2^j}$ cover $[a,b]$. Since $2^{-j} \le (b-a)$, it
	follows that $\frac{2}{ 2^j} \leq 2(b-a)$.
\end{proof}

The following lemma is a generalization of the Kolmogorov
inequality for continued fraction martingales (Vishnoi
\cite{VishnoiThesis}) to $s$-gales. The lemma states that an equality holds in the case of decompositions using prefix-free subcylinder sets upto a finite depth.

\begin{lemma} \label{lem:kolmogorovEquality}
	Let $d : \N^* \rightarrow \Rplus$ be a continued fraction
	$s$-gale. Let $v \in \N^*$ and for some $k \in \N$, let $A$ be a
	prefix free set of elements in $\N^{\leq k}$ such that $\cup_{w \in A}
	C_w = C_v$. Then, we have $d(v) \gamma^s(v) = \sum_{w\in A} d(w)
	\gamma^s(w)$.
\end{lemma}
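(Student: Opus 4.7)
My plan is to establish the result by induction on the ``excess depth'' $\ell = k - \mathrm{rank}(v)$ of the cover over $v$. In the base case $\ell = 0$, every $w \in A$ satisfies $v \sqsubseteq w$ and $\mathrm{rank}(w) \leq k = \mathrm{rank}(v)$, which forces $w = v$; hence $A = \{v\}$ and the identity is immediate.

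For the inductive step, the idea is to collapse the maximum-depth elements of $A$ one level at a time. I would let $U = \{P(w) : w \in A \text{ and } \mathrm{rank}(w) = k\}$ denote the set of parents of the rank-$k$ elements of $A$, and first verify three structural properties of an arbitrary $u \in U$: (i) $u \notin A$ and no proper prefix of $u$ lies in $A$, both being consequences of the prefix-freeness of $A$ together with the fact that some child of $u$ already lies in $A$; (ii) for every $i \in \N$, the child $[u,i]$ lies in $A$. Property (ii) is the crux: since $C_{[u,i]} \subseteq C_u \subseteq C_v$ is covered by $A$, some element of $A$ must cover each point of $C_{[u,i]}$, and that element is either a prefix of $[u,i]$ or a proper extension of $[u,i]$; prefixes are ruled out by (i), while proper extensions of $[u,i]$ would have rank at least $k+1 > k$, contradicting $A \subseteq \N^{\leq k}$. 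With these properties in hand, the $s$-gale equation applied at $u$ gives $d(u)\gamma^s(u) = \sum_{i \in \N} d([u,i])\gamma^s([u,i])$, so if I replace the countable collection $\{[u,i] : i \in \N\}$ in $A$ by the single element $u$ (doing this independently for each $u \in U$), the result is a new prefix-free set $A' \subseteq \N^{\leq k-1}$ with $\bigcup_{w \in A'} C_w = C_v$ and the same gale-weighted sum. The inductive hypothesis applied to $A'$ yields $\sum_{w \in A'} d(w)\gamma^s(w) = d(v)\gamma^s(v)$, completing the step.

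The main (and essentially only) obstacle is a careful verification of property (ii) above, along with checking that $A'$ remains prefix-free after the replacement. For prefix-freeness of $A'$: two distinct elements of $U$ share the same rank $k-1$ and are therefore incomparable; any element of $A$ outside $\bigcup_{u \in U}\{[u,i] : i \in \N\}$ that properly extends some $u \in U$ would necessarily have rank at least $k$, and if its rank equalled $k$ it would be a child of $u$ and hence would have been removed. Finally, I would remark that both $U$ and each family $\{[u,i] : i \in \N\}$ may be countably infinite, but since all terms are nonnegative and the total is bounded above by $d(v)\gamma^s(v) < \infty$, the series rearrangements implicit in the collapsing step are valid.
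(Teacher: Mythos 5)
Your proof is correct and follows essentially the same route as the paper's: both arguments induct by collapsing the deepest-rank elements of $A$ onto their parents (your set $U$ is the paper's $P'$), using prefix-freeness plus the covering condition to show every child of such a parent lies in $A$, and then applying the $s$-gale identity at each parent before invoking the inductive hypothesis on the resulting shallower cover. The only cosmetic difference is that you induct on the excess depth $k-\mathrm{rank}(v)$ rather than on $k$ itself, and you spell out the crux step (your property (ii)) and the prefix-freeness of the collapsed cover in more detail than the paper does.
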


\begin{proof} 
	We prove this result by induction on $k$, the maximum
	rank of an element in $A$. We first observe that for any $v \in \N^*$, if $C_v = \cup_{w \in A}
	C_w$, then for all $w \in A$, $v \sqsubseteq w $. Therefore, we have that $k \geq rank(v)$.
	
	Let's consider the case when $k = rank(v)$. In this case, the only possibility is
	that $A = \{v\}$ and therefore the lemma
	holds trivially.
	
	
	Assume that the lemma holds for any $k \in \N$ such that $k \geq rank(v)$.	
	Let $A$ be a prefix free set of elements in $\N^{\leq k+1}$ such that $\cup_{w \in A} C_w = C_v$.
	
	Consider the sets, $P = \{ w \in A ~|~
	rank(w) = k+1\}$ and $P'= \{ u \in \N^k ~|~ [u,i] \in P \text{ for some } i \in \N \}$.
	
	Now since $ \cup_{w \in A} ~C_w = C_v$ and $A$ is prefix free, we have that for all $u \in P'$, and for all $i \in \N$, $[u,i] \in P$. 
	
	Since $d$ is a continued fraction $s$-gale, it follows that,
	
	$$\sum_{u \in P'}d(u)\gamma^s(u) = \sum_{w \in P} d(w)
	\gamma^s(w)$$ 
	
	Also, note that since for any $u \in \mathbb{N^*}$, $C_u =
	\cup_{i \in \N}~C_{[u,i]}$, we have
	$$\cup_{u \in P'} ~C_u = \cup_{w \in P}~C_w.$$
	
	Construct the $A'= P' \cup (A - P)$. We have that the elements in $A'$ are prefix free and  $\cup_{w \in A'}
	C_w = C_v$. Since the maximum rank of an element in $A'$ is $k$, we have that,
	
	$$d(v) \gamma^s(v) = \sum_{w\in A'} d(w) \gamma^s(w)$$
	
	Since $P' \cap (A - P) = \phi$, it follows that
	
	$$d(v)\gamma^s(v) = \sum_{w \in A - P}
	d(w) \gamma^s(w) + \sum_{\substack{u \in P'}} d(u) \gamma^s(u).$$ 
	
	Using $\sum_{u \in P'}d(u)\gamma^s(u) = \sum_{w \in P} d(w)
	\gamma^s(w)$, we have
	
	$$d(v)\gamma^s(v) = \sum_{w \in A - P}
	d(w) \gamma^s(w) + \sum_{\substack{w \in P}} d(w) \gamma^s(w).$$ 
	
	From which we have
	
	$$d(v)\gamma^s(v) = \sum_{w \in A} d(w) \gamma^s(w) .$$
\end{proof}

In the construction of a binary $s$-gale from continued fraction gales, The first step is the following decomposition of 
a binary cylinder into a set of prefix free continued fraction cylinders.

\begin{lemma}[Vishnoi \cite{VishnoiThesis}]
	\label{lem:binarytocfDivision}
	For every $w \in \Sigma^*$, there exists a set $I(w) \subseteq \N^*$
	and a constant $k \in \N$ such that, 
	\begin{enumerate}
		\item $y \in \N^{\leq k}$ for every $y \in I(w)$.
		\item $(\cup_{y \in I(w)} C_y ) \Delta C_w \subseteq \{\inf(C_w),
		\sup(C_w)\}$ 
		\item $I(w0) \cup I(w1) = I(w)$
		\item $I(w0) \cap I(w1) = \phi$ 
	\end{enumerate}
\end{lemma}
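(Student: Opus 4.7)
The plan is to prove the lemma by induction on $|w|$, fixing a target $w$ and building the family $\{I(v)\}_{v \sqsubseteq w}$ along the way, with the rank bound $k$ depending on $w$. For the base case $w = \lambda$, take $I(\lambda) = \{[i] : i \geq 1\}$, which has rank bound $k = 1$ and union $(0, 1]$, differing from $C_\lambda$ only at one endpoint, so property 2 holds.

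The inductive step rests on the observation that provided no cylinder in $I(v)$ contains the midpoint of $C_v$ in its interior, the natural partition $I(vb) = \{y \in I(v) : C_y \subseteq C_{vb}\}$ for $b \in \{0,1\}$ immediately yields properties 3 and 4, while property 2 is preserved because each $C_{vb}$ is covered up to its endpoints. The main obstacle is enforcing this no-straddling property simultaneously for every midpoint encountered along the path to $w$, not just the immediate one. I would handle this by pre-refining $I(\lambda)$ to respect the finite set $M = \{c_v : v \sqsubset w\}$ of dyadic midpoints, where $c_v$ denotes the midpoint of $C_v$. Each $m \in M$ is a dyadic rational with a finite canonical continued fraction expansion $m = [a_1, \dots, a_{r(m)}]$, and is therefore an endpoint of the rank-$r(m)$ continued fraction cylinder $C_{[a_1, \dots, a_{r(m)}]}$. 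Starting from $\{[i] : i \geq 1\}$ and iteratively replacing any cylinder $C_y$ that contains some $m \in M$ in its interior by its children $\{C_{[y, j]} : j \geq 1\}$, the refinement terminates: following the digits of each $m$'s expansion eventually brings $m$ to a cylinder boundary rather than an interior point. This yields a refined $I(\lambda)$ of rank bound $k := \max_{m \in M} r(m)$ in which no cylinder straddles any $m \in M$.

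Iterating the natural partition downward from this refined $I(\lambda)$ to $I(w)$ then gives the full family $\{I(v)\}_{v \sqsubseteq w}$ satisfying all four properties with the uniform rank bound $k$. The main technical point I would verify with care is the boundary bookkeeping: when a cylinder is refined, its accumulation endpoint is lost from the union, and I would have to check that every such dropped point either already coincides with $\inf(C_v)$ or $\sup(C_v)$ at the relevant level $v$, or is an element of $M$ that becomes a boundary of both $C_{v0}$ and $C_{v1}$ at the subsequent level. Combined with the fact that refinement can only add interior rational points back into the union (since a rank-$r$ cylinder boundary sits between two adjacent refined subcylinders), this confines the symmetric difference in property 2 to $\{\inf(C_v), \sup(C_v)\}$ throughout the construction.
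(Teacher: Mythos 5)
First, a caveat: the paper does not actually prove Lemma~\ref{lem:binarytocfDivision}; it imports it from Vishnoi's thesis, which supplies a recursive ``division algorithm'' for computing $I(w)$. So your proposal can only be measured against the intended construction, which the paper describes implicitly in the paragraph following the lemma and then relies on in Definition~\ref{def:proportionalSgale}. Your central termination idea is the right one and is exactly the engine of that algorithm: every binary midpoint is a dyadic rational, hence has a finite continued fraction expansion, hence ceases to be an interior point of any cylinder once one refines to the rank of that expansion; this is what yields the finite rank bound $k = k(w)$ in item 1.

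The substantive divergence is structural. You fix the target $w$, pre-refine $I(\lambda)$ with respect to the finitely many midpoints along the path to $w$, and obtain a family $\{I(v)\}_{v \sqsubseteq w}$ in which items 3 and 4 hold literally --- but this family depends on $w$. The lemma, as the paper uses it, requires a single map $I$ defined on all of $\Sigma^*$ at once: Definition~\ref{def:proportionalSgale} evaluates $I(w)$ for every $w$, and the verification that $H_d$ is an $s'$-gale uses precisely the relation stated right after the lemma, namely that $I(w0) \cup I(w1)$ differs from $I(w)$ in that the unique cylinder $u \in I(w)$ straddling the midpoint $m(w)$ is replaced by a prefix-free decomposition of $C_u$ (this is what feeds Lemma~\ref{lem:kolmogorovEquality}). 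Under that intended reading, items 3 and 4 are not literally what holds; and under your literal reading no $w$-independent family can exist at all: item 3 forces each $y \in I(\lambda)$ to remain in $I(w)$ for arbitrarily long $w$ along some branch, while item 2 forces $\mu(C_y) \leq \mu(C_w) = 2^{-|w|}$ for every such $w$, a contradiction since $C_y$ is a nondegenerate interval. So your argument establishes a per-target variant of the statement, which cannot be patched into the global object the paper actually needs; the intended proof instead defines $I$ once and for all and subdivides the straddling cylinder on the fly, with your finite-expansion observation guaranteeing that each such subdivision stops at finite rank. (A minor further point: your midpoint set $M = \{c_v : v \sqsubset w\}$ omits $c_w$ itself, which is needed to split $I(w)$ into $I(w0)$ and $I(w1)$ for items 3 and 4.)
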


Moreover, given $w\in\Sigma^*$, Vishnoi \cite{VishnoiThesis} gives a
division algorithm to compute $I(w)$. It is also clear from the
division algorithm that for all $w \in \Sigma^*$, there exists a $u
\in I(w)$ such that for all $v \in (I(w0) \cup I(w1)) \setminus I(w)$,
we have $u \sqsubset v$. This $u \in I(w)$ is the continued fraction
cylinder for which the mid point of $w$, $m(w)$ is an interior point
in $C_u$ and therefore gets divided.

From a continued fraction martingale, Vishnoi \cite{VishnoiThesis} uses the decomposition $I(w)$ to construct a binary martingale that places the same aggregate bets on an interval. 
We generalize this construction to the setting of
$s$-gales. Given a continued fraction $s'$-gale $d:\N^* \to
\lcro{0}{\infty}$, using the decomposition $I(w)$, we construct a
binary $s'$-gale $H_d$ from $d$.

\begin{definition} \label{def:proportionalSgale}
	Given any continued fraction $s'$-gale $d: \N^* \to \lcro{0}{\infty}$,
	define the \emph{Proportional binary $s'$-gale} of $d$, $H_d :\Sigma^*
	\to \lcro{0}{\infty}$ as follows:
	$$
	H_d(w) = \sum_{{y \in I(w)}}d(y) \left(\frac{
		\gamma(y)}{\mu(w)}\right)^{s'}.
	$$
\end{definition}

For a $w \in \Sigma^*$, let $I'(w) = I(w0) \cup I(w1)$. Then we have,
\begin{align*}
	H_d(w0) + H_d(w1) = 2^{s'}  \sum_{y \in I'(w) } d(y)
	\Big(\frac{ \gamma(y)}{\mu(w)}\Big)^{s'}.
\end{align*}
Let $u \in I(w)$ such that for all $v \in
I'(w) \setminus I(w)$, $u \sqsubset v$. Hence, by Lemma
\ref{lem:kolmogorovEquality}, it follows that  $\sum_{y \in I'(w) } d(y) \gamma^{s'}(y) = \sum_{y \in I(w) } d(y) \gamma^{s'}(y)$. Therefore, we have 
$H_d(w0) + H_d(w1) = 2^{s'}
H_d(w)$, so $H_d$ is an $s'$-gale. Also as $\gamma(\lambda) =
1$, we have that $H_d(\lambda) = d(\lambda)$.

As $I(w)$ is computably enumerable, it follows that $H_d$ is lower
semicomputable if $d$ is lower semicomputable.

The construction by Vishnoi \cite{VishnoiThesis} proceeds using the
\emph{savings-account trick} for martingales. In the setting of
$s$-gales, however, the concept of a savings account does not work
directly. Therefore, we require additional constructions in this
setting.

Using ideas from the construction given in Lemma 3.1 in Hitchcock and
Mayordomo \cite{HitchMayor13}, we construct a ``smoothed'' $s$-gale
$H_h : \Sigma^* \to \lcro{0}{\infty}$ from the proportional $s'$-gale constructed in Definition \ref{def:proportionalSgale}.

\begin{definition} \label{def:smoothedSgale}
	For a $w \in \Sigma^*$, and an $n > |w|$, we define 
	\begin{align*}
		F_n(w) &= \{u \in \{0^n \cup 1^n\} \mid w \prefix{u}\} \;\cup\;
		\{u \in \Sigma^n \setminus \{0^n \cup 1^n\} \mid w \prefix u+1 \text{
			and } w \prefix u-1\},\\ H_n(w) &= \{u \in \Sigma^n \mid w
		\prefix{u} \text{ or } w \prefix u + 1 \text{ or } w \prefix u - 1\} \setminus
		F_n.
	\end{align*}
\end{definition}


\begin{definition}
	Given an $s'$-gale $h: \Sigma^* \to \lcro{0}{\infty}$, for any $s>s'$
	and for each $n \in \N$, define:
	\begin{align*}
		h_n(w) = \begin{cases} \;\; 2^{s |w|} \; \bigg( \;
			\sum\limits_{u \in H_n(w)} \frac{1}{2} \; h(u) \;+\;
			\sum\limits_{u \in F_n(w)} h(u) \bigg) & \text {if }
			|w| < n\\\\ \;\; {2^{(s-1) (|w| - n + 1)}} \;\;
			h_n(w[0\dots n-2]) & \text {otherwise.}\end{cases}
	\end{align*}
	
	Define $S_h : \Sigma^* \to \lcro{0}{\infty}$ by
	$$ S_h(w) = \sum\limits_{n=0}^\infty 2^{-sn} \; h_n(w).$$
	
	We call $S_h$ as the \emph{smoothed $s$-gale of $h$}.
\end{definition}

Consider a string $w \in \Sigma^n$ other than $0^n$ and $1^n$. In  $h_n$, a factor of half the capital of $w$ gets assigned to it's immediate parent $w'$. The
other half is assigned to the neighbor of $w'$ to which $w$ is
adjacent to. 

It is straightforward to verify that each $h_n$ is an $s$-gale. $S_h$
is a combination of $s$-gales, and hence is a valid $s$-gale.  Note
that $h_n(\lambda) = \sum_{u \in \Sigma^n}h(u) = 2^{s'n}$.  Therefore
as $s>s'$, $S_h(\lambda) = \sum_{n \in \N} 2^{(s'-s)n}$ is finite.  If
$h$ is lower semicomputable, it follows that $S_h$ is lower
semicomputable.

Combining the constructions given in the section, for any $s > s' $,
we show the construction of a binary $s$-gale from a continued
fraction $s'$-gale, satisfying certain bounds on the capital acquired.

This construction helps to establish a lower bound on effective
continued fraction dimension using effective binary dimension. It is also
central in formulating a Kolmogorov complexity characterization for
continued fraction dimension.
\begin{lemma} \label{lem:untitledUsefuLemma}
	For $s' \in (0,\infty)$, let $d: \N^* \to \lcro{0}{\infty}$ be a
	continued fraction $s'$- gale. Then, for any $s > s'$, there exists a
	binary $s$-gale $h:\Sigma^* \to \lcro{0}{\infty}$ such that for any $v
	\in \N^*$ and for any $b \in \Sigma^*$ such that $C_b \cap C_v \neq
	\phi$ and $ \frac{1}{16} \mu(v) \leq \mu(b) \leq 2\mu(v)$, we have
	$$h(b) \geq c_s d(v),$$ where $c_s$ is a constant that depends on $s$.
	Moreover, if $d$ is lower semicomputable, then $h$ is lower
	semicomputable.
\end{lemma}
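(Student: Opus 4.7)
The plan is to take $h = S_{H_d}$: first apply Definition \ref{def:proportionalSgale} to convert the continued fraction $s'$-gale $d$ into the binary $s'$-gale $H_d$, and then apply Definition \ref{def:smoothedSgale} to smooth it into a binary $s$-gale for any $s > s'$. The remarks following those definitions already establish that $h$ is a valid binary $s$-gale and that lower semicomputability is preserved at each step, so what remains is the capital bound $h(b) \geq c_s d(v)$.

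First I would locate, near $b$, a pair of binary cylinders on which $H_d$ already witnesses a constant fraction of $d(v)$. By Lemma \ref{lem:binary_covering_cf}, pick two consecutive dyadic cylinders $b_1^*, b_2^* \in \Sigma^k$ whose union covers $C_v$, with $\mu(v)/2 \leq 2^{-k} \leq 2\mu(v)$. The sets $A_i = \{y \in I(b_i^*) : C_y \subseteq C_v\}$ consist entirely of extensions of $v$ in $\N^*$, and $A_1 \cup A_2$ is a prefix-free cover of $C_v$ (up to boundary) of bounded rank. Applying Lemma \ref{lem:kolmogorovEquality} to $A_1 \cup A_2$ and using $A_i \subseteq I(b_i^*)$ yields
\[
d(v)\gamma^{s'}(v) = \sum_{y \in A_1 \cup A_2} d(y)\gamma^{s'}(y) \leq \mu^{s'}(b_1^*) H_d(b_1^*) + \mu^{s'}(b_2^*) H_d(b_2^*).
\]
Combined with $\gamma(v) \geq \mu(v)/(2\ln 2)$ from Lemma \ref{lem:lebesgue_gauss} and $\mu(b_i^*) \leq 2\mu(v)$, this produces $H_d(b_1^*) + H_d(b_2^*) \geq c_1 d(v)$ for a constant $c_1$ depending only on $s'$.

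Next I would use the smoothing to transfer this capital to $h(b)$. The hypotheses force the level difference $|b| - k$ to lie in a bounded range (roughly $[-2,5]$), and $b_1^*, b_2^*$ are the only dyadic cylinders at level $k$ meeting $C_v$. For a suitable depth $n$ between $|b|$ and $k$, the term $2^{-sn}(H_d)_n(b)$ in $S_{H_d}(b)$ unfolds, via the two cases of Definition \ref{def:smoothedSgale}, into a positive combination of $H_d(u)$ over $u \in H_n(b) \cup F_n(b)$. The hard part will be a careful geometric verification that $b_1^*$ and $b_2^*$ (or their descendants at depth $n$) always lie in this neighborhood: this requires case analysis on $|b|-k$, on the alignment of $b$ relative to $b_1^*, b_2^*$, and on boundary effects when $b_i^* \in \{0^k, 1^k\}$. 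The slack $s - s' > 0$, which manifests as the exponential gain factor $2^{(s-1)(|b|-n+1)}$ when $|b| > n$, is precisely what allows a uniform constant $c_s$ to absorb the neighborhood losses introduced by the smoothing.
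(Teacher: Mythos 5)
Your plan is essentially the paper's own proof: take $h = S_{H_d}$, get $H_d(w_1)+H_d(w_2) \geq c_1\, d(v)$ for the two smallest consecutive dyadic cylinders covering $C_v$ via Lemmas \ref{lem:kolmogorovEquality}, \ref{lem:binary_covering_cf} and \ref{lem:lebesgue_gauss}, and then use the level-$k$ term of the smoothed gale together with the bound $|b|-k \in [-2,5]$ to push that capital onto $b$. The case analysis you defer is exactly what the paper carries out, by observing that $b$ extends a grandparent $P(P(w_i))$ (and, in the recursive case, that the length-$(k-1)$ prefix of $b$ is the parent of $w_1$ or $w_2$), so that both $w_1$ and $w_2$ sit in the relevant level-$k$ neighborhood with weight at least $1/2$; note only that the slack $s>s'$ is what makes $S_{H_d}(\lambda)$ finite, while the losses in the transfer to $b$ are absorbed simply because $|b|-k$ is bounded, not because of $s-s'$.
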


\begin{proof}
	Given an $s'$-gale $d: \N^* \to	\lcro{0}{\infty}$, 
	let $h' = H_d$, the proportional binary $s'$-gale of $d$ given in Definition \ref{def:proportionalSgale}.
	
	For a $v \in \N^*$, consider the smallest $w_1,w_2 \in \Sigma^*$ such that $C_v \subseteq C_{w_1} \cup C_{w_2}$. 
	Also we can see that there exists a $S \subseteq I(w_1) \cup I(w_2)$ such that $C_v = \cup_{u \in S} C_u$. 
	
	Therefore from Lemma \ref{lem:kolmogorovEquality}, we get $h'(w_1) + h'(w_2) \geq d(v) \frac{\gamma^{s'}(v)}{\mu^{s'}(w_1)}$. 
	From Lemma \ref{lem:binary_covering_cf}, we get that $\mu(w_1) = \mu(w_2) \leq 2 \mu(v)$. Also from Lemma \ref{lem:lebesgue_gauss}, we have that $\gamma(v) \leq (ln2)^{-1} \mu(v)$. Therefore, $h'(w_1) + h'(w_2) \geq (2 ln2)^{-s'} d(v)$. Now for any $s>s'$, we have that $h'(w_1) + h'(w_2) \geq c_1 . d(v)$, where $c_1 = 1/(2ln2)^s$.
	
	Now for any $s>s'$, consider the smoothed $s$-gale $h = S_{H_d}$ of the $s'$-gale $H_d$ given in Definition \ref{def:smoothedSgale}. 
	
	Let $|w_1| = n$ , and let $W_1 = P(P(w_1))$ be the parent cylinder of parent of $w_1$. Similarly let $W_2 = P(P(w_2))$. We see that for any $W \in \{W_1,W_2\}$, $h_n(W) \geq 2^{s(n-2)} \frac{h'(w_1)+h'(w_2)}{2} \geq c_2.2^{sn}.d(v)$, where $c_2 = 2^{-(2s+1)} c_1$.
	
	Take any any $b \in \Sigma^*$ such that $C_b \cap C_v \neq \phi$ and $ 2.\mu(v) \geq \mu(b) \geq \frac{1}{16} \mu(v)$. Since $\mu(b) \leq 2\mu(v)$ and $\mu(v) \leq 2.\mu(w_1)$, it follows that for some  $W \in \{W_1,W_2\}$, $W \prefix b$.
	Also since $\mu(b) \geq \frac{1}{16} \mu(v)$, we have that, $\mu(b) \geq \frac{1}{32} \mu(w_1)$.
	
	Therefore, we have that $h_n(b) \geq 2^{5(s-1)}h_n(W) \geq c_3.2^{sn}.d(v)$, where $c_3 = c_2.2^{5(s-1)}$. 
	
	Since $h(b) \geq 2^{-sn} h_n(b)$, we have that $h(b) \geq c_3. d(v)$. 
\end{proof}

\section{Kolmogorov Complexity characterization of Continued Fraction Dimension}

Mayordomo \cite{Mayordomo02} extended the result by Lutz
\cite{Lutz2000} to show that effective dimension of a binary sequence $X \in
\Sigma^\infty$ can be characterized in terms of the Kolmogorov
complexity of the finite prefixes of $X$.

\begin{theorem} [Mayordomo \cite{Mayordomo02} and Lutz
	\cite{Lutz2000}] \label{thm:cdimAndliminf} 
	For every $X \in \Sigma^\infty$,$$\cdim(X) =
	\liminf\limits_{n \rightarrow \infty} \frac{K(X \restr n)}{n}.$$ 
\end{theorem}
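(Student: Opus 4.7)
The plan is to prove the two inequalities separately; write $\beta := \liminf_n K(X \restr n)/n$.

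For $\beta \leq \cdim(X)$, I would fix any $s > \cdim(X)$ and pick a lower semicomputable binary $s$-gale $d$ with $X \in S^\infty(d)$. Iterating the $s$-gale identity gives the Kraft-type bound $\sum_{w \in \Sigma^n} d(w) 2^{-sn} = d(\lambda)$, so $B_n := \{w \in \Sigma^n : d(w) \geq 1\}$ has at most $d(\lambda) 2^{sn}$ elements. Since $d$ is lower semicomputable, $B_n$ is computably enumerable uniformly in $n$, and any $w \in B_n$ can be described by $n$ together with its index in an enumeration of $B_n$, giving $K(w) \leq sn + 2 \log n + O(1)$. Because $d(X \restr n) \to \infty$ along a subsequence, $X \restr n \in B_n$ for infinitely many $n$; dividing by $n$ and letting $s \searrow \cdim(X)$ delivers $\beta \leq \cdim(X)$.

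For the reverse inclusion $\cdim(X) \leq \beta$, I would fix a rational $s > \beta$ and build a lower semicomputable binary $s$-gale that succeeds on $X$, using the universal lower semicomputable continuous semimeasure $\mathbf{M}$ on binary strings (the a priori probability). This satisfies $\mathbf{M}(w0) + \mathbf{M}(w1) \leq \mathbf{M}(w)$ and $-\log \mathbf{M}(w) = K(w) + O(\log |w|)$. Setting $d(w) := 2^{s|w|} \mathbf{M}(w)$ yields a lower semicomputable $s$-supergale; redistributing the slack $2^s d(w) - d(w0) - d(w1)$ to a canonical child upgrades it to a genuine $s$-gale $\tilde d \geq d$ while preserving lower semicomputability. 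By the choice of $s$, $K(X \restr n) < sn$ for infinitely many $n$, and along that subsequence $d(X \restr n) \geq 2^{sn - K(X \restr n) - O(\log n)}$ diverges, so $X \in S^\infty(\tilde d)$ and hence $\cdim(X) \leq s$. Letting $s \searrow \beta$ yields the required bound.

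The main obstacle I anticipate is the second direction: reconciling plain Kolmogorov complexity with the logarithm of a universal semimeasure requires careful accounting of the $O(\log |w|)$ slack, which becomes negligible only after dividing by $n$. The supergale-to-gale upgrade is also delicate, since naive redistribution can destroy lower semicomputability if the recipient child is not chosen computably. Once these technicalities are handled, the two directions combine to establish the claimed equality.
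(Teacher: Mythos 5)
The paper does not actually prove this theorem: it is imported from Mayordomo \cite{Mayordomo02} and Lutz \cite{Lutz2000}, and what the paper does instead is adapt the two standard arguments to the continued fraction setting (Lemma \ref{lem:dimlessthanliminf} mimics Mayordomo's direct gale construction from the c.e.\ set $\{w : K(w)\le s'|w|\}$, and Lemma \ref{lem:liminflessthandim} mimics the universal-supermartingale argument of Downey--Hirschfeldt). Measured against those standard proofs, your first direction ($\liminf_n K(X\restr n)/n \le \cdim(X)$) is essentially correct: the counting bound $|\{w\in\Sigma^n : d(w)\ge 1\}|\le d(\lambda)2^{sn}$ and the ``index in a uniform enumeration'' description are the classical route. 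The only nit is that $\{w : d(w)\ge 1\}$ need not be c.e.\ when $d$ is merely lower semicomputable; use the strict threshold $\{w : d(w)>1\}$ (or $>1/2$), which is uniformly c.e.\ and still contains $X\restr n$ infinitely often since $\limsup_n d(X\restr n)=\infty$.

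The second direction has a genuine gap. Setting $d(w)=2^{s|w|}\mathbf{M}(w)$ gives only a lower semicomputable $s$-\emph{super}gale, and your proposed repair --- redistribute the slack $2^s d(w)-d(w0)-d(w1)$ to a canonically chosen child to get an exact $s$-gale $\tilde d\ge d$ ``while preserving lower semicomputability'' --- does not work, and the obstruction is not the choice of recipient child. If, say, the left child always receives the slack, then $\tilde d(w0)=2^s\tilde d(w)-d(w1)$ is a \emph{difference} of lower semicomputable quantities, and differences of l.s.c.\ reals are not l.s.c.\ in general; there is no computable way to know how much of $d(w1)$'s mass is still to be enumerated. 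In fact the desired object provably fails to exist in the natural formulation: an exact l.s.c.\ $s$-gale with computable initial capital is automatically computable (approximate all values at one level from below until the exact gale identity leaves error $<\varepsilon$), so $2^{-s|w|}\tilde d(w)$ would be a computable measure dominating $\mathbf{M}$ pointwise, which is impossible since $\mathbf{M}(w)\ge 2^{-K(w)-O(1)}$ while any computable measure assigns weight $\le 2^{-n+O(1)}$ to some string of length $n$ of complexity $O(\log n)$. To close the gap you must either prove the nontrivial (but standard) lemma that l.s.c.\ supergales and l.s.c.\ gales yield the same constructive dimension --- e.g.\ by applying the cover-to-gale construction of Lemma \ref{lem:scoverTosgale} to the uniformly c.e.\ sets $\{w : d(w)>2^k\}$, whose minimal elements have $s$-weight at most $2^{-k}d(\lambda)$ --- or bypass $\mathbf{M}$ altogether and build an exact l.s.c.\ $s$-gale directly from the c.e.\ set $\{w : K(w)\le s'|w|\}$ for some $\beta<s'<s$, betting all capital of each enumerated $w$ along the path to $w$; this is Mayordomo's original argument and is exactly the construction the paper transplants to continued fractions in Lemma \ref{lem:dimlessthanliminf}.
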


We provide a similar characterization for effective continued fraction
dimension. To obtain the Kolmogorov complexity of a continued fraction
string, we use the Kolmogorov complexity of one of its binary
encodings. 

The idea of encoding a finite continued fraction using a 1-1 binary
encoding is present in Vishnoi \cite{Vishnoi2023normality}. The author
presents an invariance theorem stating that every computable binary
1-1 encoding of continued fractions defines the same Kolmogorov
complexity, up to an additive constant. Hence in this work, we use a
new binary encoding to define Kolmogorov complexity of continued
fractions, which helps us establish the characterization of effective
dimension of continued fractions in a fairly simple manner while
having intuitive geometric meaning.

\begin{definition} [Many-one binary encoding]
	For a continued fraction string $v \in \N^*$, let $b_v$ be the
	leftmost maximal binary cylinder which is enclosed by $C_v$. We define
	$E(v) = b_v$.
\end{definition}

\begin{lemma} \label{lem:cfEncoingMax3}
	For any $b \in \Sigma^*$, there exists at most three $v \in \N^*$ such
	that $E(v) = b$.
\end{lemma}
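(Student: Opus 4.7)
The plan is to show that the set $V := \{v \in \N^* : E(v) = b\}$ of preimages is linearly ordered under the prefix order $\sqsubseteq$, that every $v \in V$ has Lebesgue measure in the narrow range $[\mu(b), 4\mu(b)]$, and then to use the fact that CF refinement halves measure to bound $|V|$ by $3$.

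First I would observe that $V$ forms a chain: if $v, v' \in V$ then $C_b \subseteq C_v \cap C_{v'}$, so the two CF cylinders meet, and by the tree ordering of CF cylinders (any two are either disjoint or nested), either $v \sqsubseteq v'$ or $v' \sqsubseteq v$. Next, for each $v \in V$, the inclusion $C_b \subseteq C_v$ gives $\mu(b) \le \mu(v)$, while the fact that $b$ is the leftmost maximal---and in particular, a largest---dyadic cylinder enclosed by $C_v$, together with Lemma~\ref{lem:vishnoiMeasure} applied to $C_v$, yields $\mu(b) \ge \mu(v)/4$. So every $v \in V$ satisfies $\mu(v) \in [\mu(b), 4\mu(b)]$.

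I would then invoke the measure-halving property of CF refinements. Using the standard formula $\mu([a_1, \ldots, a_n]) = 1/(q_n(q_n + q_{n-1}))$ in terms of convergent denominators, a short calculation gives $\mu([v,a])/\mu(v) = q_n(q_n + q_{n-1})/((aq_n + q_{n-1})((a+1)q_n + q_{n-1})) \le 1/2$ for every $v \in \N^*$ and $a \in \N$, so successive cylinders in any chain have measures decreasing by at least a factor of $2$. If $|V| \ge 4$, picking $v_1 \sqsubset v_2 \sqsubset v_3 \sqsubset v_4$ in $V$ gives $\mu(v_4) \le \mu(v_1)/8 \le \mu(b)/2 < \mu(b)$ after three halvings, contradicting $\mu(v_4) \ge \mu(b)$. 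Hence $|V| \le 3$.

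The main obstacle is fixing the interpretation of ``leftmost maximal'': the argument reads it as the leftmost among the dyadic subcylinders of $C_v$ of maximum Lebesgue measure, because that is what makes Lemma~\ref{lem:vishnoiMeasure} supply the upper bound $\mu(v) \le 4\mu(b)$. The chain and halving parts of the proof are independent of this interpretation, so any reading of ``maximal'' that forces $\mu(b)$ to be comparable to $\mu(v)$ drives the same three-step counting.
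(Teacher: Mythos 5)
Your proof is correct and follows essentially the same route as the paper's: order the preimages into a chain, note that each added continued fraction digit at least halves the Lebesgue measure (you derive this from the convergent-denominator formula, the paper from Lemma~\ref{lem:Kraaikamp}), and combine $\mu(b)\ge\mu(v_1)/4$ from Lemma~\ref{lem:vishnoiMeasure} with $\mu(b)\le\mu(v_4)$ to rule out four nested preimages. Your reading of ``leftmost maximal'' as a largest enclosed dyadic cylinder matches the paper's use of Lemma~\ref{lem:vishnoiMeasure}.
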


\begin{proof}
	For $b \in \{0,1\}^*$ assume there  exists distinct $v_1, v_2, v_3, v_4 \in \N^*$ such that $E(v_1) = E(v_2) = E(v_3) = E(v_4) = b$. Since $b$ is enclosed by all of these continued fraction cylinders, it follows that these continued fraction strings are extensions of each other. Assume without loss of generality that $v_1 \sqsubset v_2 \sqsubset v_3 \sqsubset v_4$. We can also assume that these cylinders are one length extensions of each other. From Lemma \ref{lem:Kraaikamp}, using the fact that $s_n >0$ and $i \geq 1$, it follows that for any $v \in \N^*$ and $i \in \N$, $\frac{\mu[v,i]}{\mu[v]} \leq 1/2$ . Therefore we have, $\mu(v_4) \leq \frac{1}{8} \mu(v_1)$. But from Lemma \ref{lem:vishnoiMeasure}, we get $\mu(b) \geq \frac{1}{4}\mu(v_1)$. Combining these two, we get that $\mu(b) \geq 2 \mu(v_4)$, which leads to a contradiction, since $b$ is enclosed by $v_4$. 
\end{proof}

Therefore for any $b \in \Sigma^*$, at most three continued fraction
cylinders, say $[v]$, $[v,i]$ and $[v,i,j]$ get mapped to $b$.
Therefore we pad two additional bits of information to $E(v)$, say
$b_1(v).b_2(v)$ to identify the continued fraction cylinder that
$E(v)$ corresponds to.

\begin{definition}[One-one binary encoding]
	
	For $v \in \N^*$, let $\mathcal{E}(v) = E(v).b_1(v).b_2(v)$. This
	forms a one to one binary encoding of $v$.
	
\end{definition}

We define Kolmogorov complexity of continued fraction string $v \in
\N^*$ as the Kolmogorov complexity of $\mathcal{E}(v)$.

\begin{definition} [Kolmogorov complexity of continued fraction strings]
	For any $v \in \N^*$, define $K_{\mathcal{E}}(v) = K(\mathcal{E}(v))$. 
\end{definition}

{\bf Notation.} By the invariance theorem of Vishnoi
\cite{Vishnoi2023normality}, for any $v \in \Sigma^*$,
$K_{\mathcal{E}}$ is at most an additive constant more than the
complexity of $v$ as defined in \cite{Vishnoi2023normality}. Hence, we
drop the suffix and denote the above complexity as $K(v)$.

In the proof of Theorem \ref{thm:cdimAndliminf}, Mayordomo
\cite{Mayordomo02} provides the construction of an $s$-gale that
succeeds on all $X$ for which $s > s' > \liminf_{n \rightarrow \infty}
\frac{K(X \restr n)}{n}$. We extend the construction to the setting of
continued fractions.

Additionally, we take a convex combination of gales to remove the
dependence of the $s$-gale on the parameter $s'$. Due to this, we
obtain the notion of an optimal lower semicomputable continued
fraction $s$-gale. This notion is crucial in the proofs we use in the
upcoming sections.

\begin{definition}\label{def:optSgaleD}
	Given  $0 < s' < s \leq 1$ let
	\begin{align*}
		G_{s'} = \{w \in \N^* \mid K(w) \leq - s' \log(\mu(w))\}.
	\end{align*}
	
	Consider the following function $d_{s'} : \N^* \rightarrow \Rplus$
	defined by
	\begin{align*}
		d_{s'}(v) =\frac{1}{\gamma^{s}(v)} 
		\left( 
		\sum_{w \in G_{s'} ; v \prefix w} \gamma^{s'}(w) 
		+ \sum_{w \in G_{s'} ; w \sqsubset v} \gamma^{s'}(w)
		\frac{\gamma(v)}{\gamma(w)}\right).
	\end{align*}
	
	Now for each $i \in \N$, let $s_{i} = s(1- 2^{-i})$. Finally, define
	$d^*:\N^* \to \Rplus$ by
	\begin{align*}
		d^*(v) = \sum\limits_{i = 1}^\infty2^{-i} d_{s_i}(v).
	\end{align*}
\end{definition}

We now go on to show that the function $d^*$ given in Definition
\ref{def:optSgaleD} is a lower semicomputable $s$-gale. Additionally,
it succeeds on all continued fraction sequences $Y$ for which the
Kolmogorov complexity of its prefixes, $K(Y\restr n)$ dips below $s
\times -\log(\mu(Y \upharpoonleft n))$ infinitely often.

\begin{lemma} \label{lem:dimlessthanliminf}
	For any $0 < s \leq 1$, there exists a lower semicomputable continued
	fraction $s$-gale $d^* : \N^* \rightarrow \lcro{0}{\infty}$ that succeeds on  all $Y \in \N^\infty$ such that $\liminf\limits_{n
		\rightarrow \infty} \frac{K(Y \upharpoonleft n)}{- \log(\mu(Y
		\upharpoonleft n))}< s$.
\end{lemma}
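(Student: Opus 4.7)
The plan is to verify in sequence four properties of the candidate $d^*$ from Definition \ref{def:optSgaleD}: (a) each $d_{s'}$ satisfies the $s$-gale equation; (b) $d^*(\lambda) < \infty$; (c) $d^*$ is lower semicomputable; and (d) $d^*$ succeeds on every $Y \in \N^\infty$ meeting the $\liminf$ hypothesis. For (a), I fix $0 < s' < s$ and show $d_{s'}(v)\gamma^s(v) = \sum_{i \in \N} d_{s'}(vi)\gamma^s(vi)$ by direct expansion. The first inner sum, $\sum_i \sum_{w \in G_{s'},\, vi \prefix w} \gamma^{s'}(w)$, collapses to $\sum_{w \in G_{s'},\, v \prprefix w} \gamma^{s'}(w)$, while the second inner sum rearranges using $\sum_i \gamma(vi) = \gamma(v)$ (shift invariance of the Gauss measure) into $\sum_{w \in G_{s'},\, w \prefix v} \gamma^{s'}(w) \gamma(v)/\gamma(w)$. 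Splitting off the $w = v$ case from this second expression restores precisely the $w = v$ term missing from the first, yielding $d_{s'}(v)\gamma^s(v)$.

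For (b), substituting $v = \lambda$ gives $d_{s'}(\lambda) = \sum_{w \in G_{s'}} \gamma^{s'}(w)$. Using Lemma \ref{lem:lebesgue_gauss}, $\gamma^{s'}(w) \leq (\ln 2)^{-s'} \mu^{s'}(w)$, and by the definition of $G_{s'}$, $\mu^{s'}(w) \leq 2^{-K(w)}$. Since $\mathcal{E}$ is a one-to-one encoding of $\N^*$ into $\Sigma^*$ and $K$ denotes prefix Kolmogorov complexity, Kraft's inequality bounds $\sum_{w \in \N^*} 2^{-K(\mathcal{E}(w))} \leq \sum_{x \in \Sigma^*} 2^{-K(x)} \leq 1$, so $d_{s'}(\lambda)$ is bounded by a constant uniform in $s'$, and hence $d^*(\lambda) = \sum_i 2^{-i} d_{s_i}(\lambda) < \infty$. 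For (c), $G_{s'}$ is recursively enumerable since $K$ is upper semicomputable and $\gamma$ is computable; enumerating $G_{s'}$ and summing yields a rational lower approximation of $d_{s'}$, and weighting the $d_{s_i}$ by $2^{-i}$ preserves lower semicomputability.

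For (d), suppose $\liminf_n K(Y \restr n)/(-\log \mu(Y \restr n)) < s$. Since $s_i = s(1 - 2^{-i}) \uparrow s$, I pick $i$ so that the $\liminf$ lies below $s_i$, so that $Y \restr n \in G_{s_i}$ for infinitely many $n$. Retaining only the $w = Y \restr n$ term in the first sum of $d_{s_i}(Y \restr n)\gamma^s(Y \restr n)$ gives $d_{s_i}(Y \restr n) \geq \gamma(Y \restr n)^{s_i - s} = \gamma(Y \restr n)^{-s \cdot 2^{-i}}$. Because $\gamma(Y \restr n) \to 0$ as the continued fraction cylinders shrink to $\{Y\}$, this quantity diverges along the subsequence, so $\limsup_n d_{s_i}(Y \restr n) = \infty$, and $d^* \geq 2^{-i} d_{s_i}$ succeeds on $Y$.

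The principal obstacle is (b): the finiteness of $d_{s'}(\lambda)$ hinges on $\sum_w 2^{-K(w)}$ being bounded, which is immediate under the prefix Kolmogorov interpretation of $K$ via Kraft's inequality but would diverge for plain complexity even after grouping strings by the depth parameter $-\log \mu(w)$. Confirming that $K$ is understood in the prefix sense for the one-to-one encoding $\mathcal{E}$, and invoking the invariance theorem from \cite{Vishnoi2023normality} to transfer Kraft's bound through $\mathcal{E}$, is the delicate point; the rest is $s$-gale algebra along the lines of Lemma \ref{lem:kolmogorovEquality} combined with the measure comparisons in Lemma \ref{lem:lebesgue_gauss}.
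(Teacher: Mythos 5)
Your proof is correct, and it follows the paper's overall skeleton (the same gales $d_{s'}$ of Definition \ref{def:optSgaleD}, the same convex combination $d^*=\sum_i 2^{-i}d_{s_i}$, and the same success argument via $d_{s'}(w)\ge \gamma(w)^{-(s-s')}$ for $w\in G_{s'}$), but your treatment of the key finiteness step (b) is genuinely different from, and simpler than, the paper's. The paper bounds $d_{s'}(\lambda)$ by passing through the binary encoding: it shows that $w\in G_{s'}$ forces $\mathcal{E}(w)\in B=\{x: K(x)\le s'|x|\}$ (using Lemma \ref{lem:vishnoiMeasure} to compare $\mu(w)$ with $\mu(\mathcal{E}(w))$), and then invokes Mayordomo's counting bound $|B^{=n}|\le 2^{s'n-K(n)+c}$ together with Kraft's inequality over lengths to control $\sum_{x\in B}2^{-s'|x|}$. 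You instead observe that membership in $G_{s'}$ gives $\mu^{s'}(w)\le 2^{-K(w)}$ directly, and since $K(w)=K(\mathcal{E}(w))$ with $\mathcal{E}$ injective, a single application of Kraft's inequality yields $\sum_{w\in G_{s'}}\mu^{s'}(w)\le 1$, uniformly in $s'$; this bypasses both Lemma \ref{lem:vishnoiMeasure} and Mayordomo's counting lemma, at the price of leaning explicitly on $K$ being prefix complexity --- but the paper's own argument also requires this (its use of $\sum_n 2^{-K(n)}\le 1$ is a prefix-complexity fact), so your caveat is consistent with the paper's conventions. You also verify the $s$-gale identity for $d_{s'}$ explicitly, which the paper leaves implicit; the verification is right, though note that the identity $\sum_i\gamma(vi)=\gamma(v)$ is just countable additivity of the Gauss measure over the partition $C_v=\bigcup_i C_{[v,i]}$ (up to a null set of endpoints), not shift invariance. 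Finally, your divergence step uses $\gamma(Y\restr n)\to 0$, which indeed holds since $\mu([v,i])\le \tfrac12\mu(v)$ by Lemma \ref{lem:Kraaikamp} together with Lemma \ref{lem:lebesgue_gauss}.
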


\begin{proof}
	Consider any $s'< s \leq 1$. Let $G_{s'} = \{w \in \N^* \mid K(w) \leq
	- s' .\log(\mu(w))\}$. Consider the following continued fraction
	$s$-gale $d_{s'} : \N^* \rightarrow [0,\infty)$
	\begin{align*}
		d_{s'}(v) =\frac{1}{\gamma^{s}(v)} \; \left( \sum_{w \in
			G_{s'} ; v \prefix w} \gamma^{s'}(w) + \sum_{w \in
			G_{s'} ; w \sqsubset v} \gamma^{s'}(w)
		\frac{\gamma(v)}{\gamma(w)}\right).
	\end{align*}
	
	We can see that for any $s'$, since $G_{s'}$ is computably enumerable,
	$d_{s'}$ is lower semicomputable.
	
	For each $w \in G$, let $\mathcal{E}(w) = b$. By definition $K(w) =
	K(b)$. Therefore, $K(b) \leq - s' \log(\mu(w)) $. From the Definition
	of $\mathcal{E}$ and Lemma \ref{lem:vishnoiMeasure}, we see that
	$\mu(b) \geq \frac{1}{16} \mu(w)$. The extra factor of $\frac{1}{4}$
	comes from the two additional bits that are padded in $b$. Therefore,
	we have that $K(b) \leq - s' \log(16.\mu(b)) \leq s'|b| - 4s' \leq s' |b|$.
	
	Define $B = \{x \in \{0,1\}^*$ $~|~$  $K(x) \leq s'|x|\}$. Hence if $w \in G$, then $\mathcal{E}(w) \in B$. 
	
	Now, $d_{s'}(\lambda) = \sum_{w \in G} \gamma^{s'}(w)$. Using Lemma
	\ref{lem:lebesgue_gauss}, we get that $d_{s'}(\lambda) \leq c \sum_{w
		\in G} \mu^{s'}(w)$ where $c = (\ln 2)^{-s'}$. Since $s' < 1$, we
	have that $c < \ln 2$.
	
	From Lemma \ref{lem:vishnoiMeasure}, it follows that
	$\mu^s(\mathcal{E}(w)) \geq \frac{1}{16} . \mu^s(w)$. Using the fact
	that $\mathcal{E}$ is a one to one encoding, we get that
	$d_{s'}(\lambda) \leq 16 \ln 2  \sum_{x \in B} 2^{-s'|x|}$.
	
	In the proof of the Kolmogorov complexity characterization of constructive dimension \cite{Mayordomo02}, Mayordomo shows that $\sum_{x \in B} 2^{-s'|x|} \leq 2^c$ for some constant
	$c$. The proof uses the fact that $|B^{=n}|
	\leq 2^{s'n - K(n)+c}$ along with the Kraft inequality $\sum_n 2^{-K(n)+c}
	\leq 2^c$.  Therefore $d_{s'}(\lambda)$ is finite.
	
	Note that for each $w \in G_{s'}$ we see that $d_{s'}(w) \geq
	(\frac{1}{\gamma(w)})^{s - s'}$.
	
	Now for each $i \in \N$, let $s_{i} = s(1- 2^{-i})$. Finally, consider
	the following continued fraction $s$-gale.
	$$d^*(v) = \sum\limits_{i = 1}^\infty2^{-i} d_{s_i}(v).$$
	
	Consider any $Y \in \N^\infty$ such that $\liminf\limits_{n
		\rightarrow \infty} \frac{K(Y \upharpoonleft n)}{- \log(\mu(Y
		\upharpoonleft n))}< s$. Let $i \in \N$ be the smallest number such
	that $\liminf\limits_{n \rightarrow \infty} \frac{K(Y \upharpoonleft
		n)}{- \log(\mu(Y \upharpoonleft n))} < s_i $.
	
	For each $w \in G_{s_i}$ we see that $d^*(w) \geq 2^{-i}
	(\frac{1}{\gamma(w)})^{s - s_i}$. We see that there are infinitely
	many prefixes $w$ of $Y$ for which $w \in G_{s_i}$. Hence it follows
	that $d^*$ succeeds on Y.
\end{proof}

We refer to Downey and Hirschfeldt's (Theorem 13.3.4 \cite{Downey10})
proof of the lower bound
on constructive dimension using Kolmogorov complexity. The proof
fundamentally uses properties of the universal lower semicomputable
super-martingale.

For any real having continued fraction dimension less than $s$, we
obtain a lower semicomputable binary $s$-gale that succeeds on it from
Lemma \ref{lem:untitledUsefuLemma}. We use the success of this binary $s$-gale along with the same properties of the
universal lower semicomputable super-martingale, to prove the following lemma.

\begin{lemma}\label{lem:liminflessthandim}
	For any $Y \in \N^\infty$ and any $s > \cdim_{CF}(Y)$, we have
	$\liminf\limits_{n \rightarrow \infty} \frac{K(Y \upharpoonleft n)}{-
		\log(\mu(Y \upharpoonleft n))} \leq s.$
\end{lemma}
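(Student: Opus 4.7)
The strategy is to convert the hypothesis $s > \cdim_{CF}(Y)$ into the existence of a lower semicomputable binary $s$-gale $h$ with large values on the binary encodings $E(Y \upharpoonleft n)$ of the continued fraction prefixes of $Y$, and then apply the standard Kolmogorov complexity bound for lower semicomputable binary $s$-gales. Pick $s'$ with $\cdim_{CF}(Y) < s' < s$. By the definition of effective continued fraction dimension, there is a lower semicomputable continued fraction $s'$-gale $d$ with $Y \in S^\infty(d)$, so $d(Y \upharpoonleft n) \to \infty$ along some subsequence $(n_k)$. Applying Lemma \ref{lem:untitledUsefuLemma} with gap $s > s'$ produces a lower semicomputable binary $s$-gale $h$ and a constant $c_s > 0$ with $h(b) \geq c_s\,d(v)$ whenever $C_b \cap C_v \neq \emptyset$ and $\tfrac{1}{16}\mu(v) \leq \mu(b) \leq 2\mu(v)$. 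Setting $b_n = E(Y \upharpoonleft n)$, we have $C_{b_n} \subseteq C_{Y \upharpoonleft n}$ by construction of $E$; Lemma \ref{lem:vishnoiMeasure} gives $\mu(b_n) \geq \tfrac{1}{4}\mu(Y \upharpoonleft n) \geq \tfrac{1}{16}\mu(Y \upharpoonleft n)$; and $\mu(b_n) \leq \mu(Y \upharpoonleft n)$ is immediate. Therefore $h(b_n) \geq c_s\,d(Y \upharpoonleft n)$ for every $n$.

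Next I would invoke the classical bound relating Kolmogorov complexity to lower semicomputable binary $s$-gales: there is a constant $K_h$ such that
\[ K(b) \leq s\,|b| - \log h(b) + K_h + o(|b|) \]
for every $b \in \Sigma^*$. This is exactly the lower-bound ingredient in Downey and Hirschfeldt's Theorem 13.3.4 \cite{Downey10}, obtained either by applying Kraft--Chaitin to the c.e.\ families of minimal strings on which $h$ first exceeds each threshold $2^c$, or by multiplicative domination of $h$ by the universal lower semicomputable $s$-gale. Since $\mathcal{E}(Y \upharpoonleft n)$ is $b_n$ padded with at most two extra bits, $K(Y \upharpoonleft n) \leq K(b_n) + O(1)$; and since $|b_n| = -\log \mu(b_n)$ with $\mu(b_n) \geq \tfrac{1}{4}\mu(Y \upharpoonleft n)$, we get $|b_n| \leq -\log\mu(Y \upharpoonleft n) + 2$. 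Substituting these three estimates into the gale-complexity bound yields
\[ K(Y \upharpoonleft n) \leq -s\log\mu(Y \upharpoonleft n) + 2s - \log c_s - \log d(Y \upharpoonleft n) + K_h + o(-\log\mu(Y \upharpoonleft n)) + O(1). \]

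Finally, along the subsequence $(n_k)$ we have $d(Y \upharpoonleft n_k) \to \infty$, so $-\log d(Y \upharpoonleft n_k) \to -\infty$, while $-\log\mu(Y \upharpoonleft n_k) \to \infty$. Dividing through by $-\log\mu(Y \upharpoonleft n_k)$ and passing to the limit along $k$ makes every additional term non-positive in the limit, which gives $\liminf_n K(Y \upharpoonleft n)/(-\log\mu(Y \upharpoonleft n)) \leq s$. The main obstacle is to invoke the $s$-gale-to-complexity bound in the correct form for lower semicomputable binary $s$-gales: while classical, it requires either an explicit Kraft--Chaitin construction or multiplicative domination by a universal lower semicomputable $s$-gale, and care must be taken that the $o(|b|)$ or $O(1)$ remainder does not interfere with the final division; everything else is a routine combination of Lemma \ref{lem:untitledUsefuLemma}, Lemma \ref{lem:vishnoiMeasure} and the elementary measure-bookkeeping associated with the encoding $E$.
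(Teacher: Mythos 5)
Your proposal is correct and follows essentially the same route as the paper: fix $s' $ with $\cdim_{CF}(Y) < s' < s$, use Lemma \ref{lem:untitledUsefuLemma} together with the encoding and the measure bound of Lemma \ref{lem:vishnoiMeasure} to get $h(b_n) \geq c_s d(Y\upharpoonleft n)$, and then apply the standard lower-semicomputable-gale-to-complexity bound (the Downey--Hirschfeldt Theorem 13.3.4 ingredient, which the paper realizes via domination of $2^{(1-s)|w|}h(w)$ by the universal supermartingale and $KM$). The only cosmetic difference is that you work with the unpadded encoding $E$ and absorb the two padding bits into an $O(1)$ complexity term, whereas the paper works with $\mathcal{E}$ directly; this changes nothing of substance.
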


\begin{proof} 
	Given any $s > dim_{CF}(Y)$, take an $s'$ such that $s > s' >
	dim_{CF}(Y)$. By definition, there exists a continued fraction
	$s'$-gale, say $d:\N^* \to \Rplus$ that succeeds on $Y$. For each
	prefix $v_i \prefix Y$, let $\mathcal{E}(v_i) = b_i$. From the
	definition of $\mathcal{E}$ and Lemma \ref{lem:vishnoiMeasure}, we get
	that $\mu(v) \geq \mu(b_i) \geq \frac{1}{16} \mu(v_i)$. The extra
	factor of $\frac{1}{4}$ comes from the two additional bits that gets
	added in $\mathcal{E}$.

	Using Lemma \ref{lem:untitledUsefuLemma}, we get a binary $s$-gale
	$h:\Sigma^* \to \Rplus$ such that forall $i$, $h(b_i) \geq c_s d(v_i)$
	for some constant $c_s$. Therefore $\limsup_{i \rightarrow \infty}
	h(b_i) = \infty$.
	
	Let $f$ be the universal lower semicomputable super-martingale
	\cite{LiVitanyi}. As $2^{(1-s) |b_i|} h(b_i)$ is a martingale, it
	follows that $f(b_i) \geq c_h 2^{(1-s)|b_i|} h(b_i)$ for some constant
	$c_h$ that depends only on $h$. Therefore, we have that $\limsup_{i
		\rightarrow \infty} \frac{f(b_i)}{2^{(1-s)|b_i|}} = \infty$.
	
	Let $|b_i| = n$. As noted in the proof of Theorem 13.3.4 in
	\cite{Downey10}, $f(b_i) = 2^{n - KM(b_i) \pm O(1) }$ where $K(b_i)
	\leq KM(b_i) + O(\log n)$. Hence it follows that $\limsup_{i
		\rightarrow \infty} 2^{sn - K(b_i) + O(\log n)} = \infty$. Therefore
	for infinitely many $i\in\N$, we have $K(b_i) < sn + O(\log n)$.
	
	By definition $K(v_i) = K(b_i)$. Also, we have that $|b_i| =
	-\log(\mu(b_i)) \leq -\log(\mu(v_i)) + 4$. Therefore for infinitely
	many $i\in\N$, we have $K(v_i) < -s\log(\mu(v_i)) + 4s + O(\log (-
	\log (\mu(v_i))))$. From this, it follows that $\liminf\limits_{i
		\rightarrow \infty} \frac{K(v_i)}{-\log(\mu(v_i))} \leq s$.
\end{proof}

Therefore, we have the following Kolmogorov complexity based
characterization of effective continued fraction dimension.
\begin{theorem} \label{thm:dimcfEqualsLiminf}
	For any $Y \in \N^\infty$,
	$$ dim_{CF}(Y) = \liminf\limits_{n \rightarrow \infty} \frac{K(Y
		\upharpoonleft n)}{- \log(\mu(Y \upharpoonleft n))}.$$
\end{theorem}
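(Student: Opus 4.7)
The plan is to assemble the theorem by combining the two inequalities already packaged in Lemma \ref{lem:dimlessthanliminf} and Lemma \ref{lem:liminflessthandim}. Write $L(Y) = \liminf_{n \to \infty} \frac{K(Y \upharpoonleft n)}{-\log(\mu(Y \upharpoonleft n))}$ for the quantity on the right hand side of the claimed equality. Observe that the natural encoding bound $K(v) \leq |\mathcal{E}(v)| + O(1)$, combined with Lemma \ref{lem:vishnoiMeasure}, gives $K(Y \upharpoonleft n) \leq -\log \mu(Y \upharpoonleft n) + O(1)$, so that $L(Y) \leq 1$. Hence it suffices to work with $s \in (0,1]$ throughout.

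For the inequality $\cdim_{CF}(Y) \leq L(Y)$, I fix any $s$ with $L(Y) < s \leq 1$ and appeal to Lemma \ref{lem:dimlessthanliminf} to obtain a lower semicomputable continued fraction $s$-gale $d^*$ whose success set contains every sequence whose liminf is strictly less than $s$, and in particular contains $Y$. This means $s \in \mathcal{G}_{CF}(\{Y\})$, and taking the infimum over such $s$ yields $\cdim_{CF}(Y) \leq L(Y)$.

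For the reverse direction $L(Y) \leq \cdim_{CF}(Y)$, Lemma \ref{lem:liminflessthandim} directly provides $L(Y) \leq s$ for every $s > \cdim_{CF}(Y)$, and taking the infimum over such $s$ completes the argument.

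There is no real obstacle remaining at this stage: the substantive content has been discharged in the two supporting lemmas, namely the construction of the optimal lower semicomputable $s$-gale $d^*$ via the convex combination over $s_i = s(1-2^{-i})$ (which removes the dependence on an auxiliary parameter $s' < s$), and the transfer of continued fraction $s'$-gales into binary $s$-gales through Lemma \ref{lem:untitledUsefuLemma} in order to invoke the universal lower semicomputable supermartingale. Assembling these two one-sided bounds into the stated equality is essentially bookkeeping around the two infima.
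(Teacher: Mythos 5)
Your proof is correct and takes essentially the same route as the paper: both directions are obtained by applying Lemma \ref{lem:dimlessthanliminf} (for $\cdim_{CF}(Y)\le L(Y)$) and Lemma \ref{lem:liminflessthandim} (for $L(Y)\le \cdim_{CF}(Y)$) and then taking infima over $s$. Your added remark that $L(Y)\le 1$, which justifies restricting to $s\in(0,1]$ as required by Lemma \ref{lem:dimlessthanliminf}, is a small bookkeeping point the paper leaves implicit, but otherwise the arguments coincide.
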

\begin{proof}
	For any $Y \in \N^\infty$, let $s^* = \liminf\limits_{n \rightarrow
		\infty} \frac{K(Y \upharpoonleft n)}{- \log(\mu(Y \upharpoonleft
		n))}$.
	
	For any $s > s^*$, from Lemma \ref{lem:dimlessthanliminf}, it follows
	that there exists a lower semicomputable $s$-gale $\D$ that succeeds on
	$Y$. Hence $\dim_{CF}(Y) \leq s^*$.
	
	For any $s > \dim_{CF}(Y)$, from Lemma \ref{lem:liminflessthandim}, we
	have that $s^* \leq s$. Therefore, we have $s^* \leq \dim_{CF}(Y)$.
\end{proof}

\subsection{Optimal gales and effective continued fraction dimension of a set}\label{sec:optSgale}

Lutz \cite{Lutz2003} utilizes the notion of the optimal constructive
subprobability supermeasure $\mathbf{M}$ on the Cantor space \cite{ZvonkinLevin70} to provide a
notion of an optimal constructive supergale.

We note that using Theorem \ref{thm:dimcfEqualsLiminf}, the gale that
we obtain from Lemma \ref{lem:dimlessthanliminf} leads to an analogous
notion in the continued fraction setting. We call the continued
fraction $s$-gale $d^*$ thus obtained as the \emph{optimal
	lower semicomputable continued fraction $s$-gale}.

\begin{lemma} \label{lem:almostOptimalSgale}
	For any $s>0$, there exists a lower semicomputable continued fraction
	$s$-gale $d^* : \N^* \rightarrow \lcro{0}{\infty}$ such that
	for all $Y \in \N^\infty$ with $\cdim_{CF}(Y)< s$, $d^*$
	succeeds on $Y$.
\end{lemma}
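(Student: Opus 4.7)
The proof plan is essentially to invoke the two main results just established, namely Lemma \ref{lem:dimlessthanliminf} and Theorem \ref{thm:dimcfEqualsLiminf}, in composition. The key observation is that Lemma \ref{lem:dimlessthanliminf} already constructs a lower semicomputable continued fraction $s$-gale $d^*$ whose success set contains every $Y \in \N^\infty$ satisfying the Kolmogorov-complexity dip condition
\[
\liminf_{n \to \infty} \frac{K(Y \restr n)}{-\log \mu(Y \restr n)} < s,
\]
and that this gale is built so as to eliminate dependence on any auxiliary parameter $s'$ through the weighted sum $d^*(v) = \sum_{i \ge 1} 2^{-i} d_{s_i}(v)$ with $s_i = s(1-2^{-i})$.

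First I would fix an arbitrary $s > 0$ and take $d^*$ to be precisely the gale constructed in Definition \ref{def:optSgaleD} and shown to be a lower semicomputable $s$-gale in Lemma \ref{lem:dimlessthanliminf}. Second, given any $Y \in \N^\infty$ with $\cdim_{CF}(Y) < s$, I would apply Theorem \ref{thm:dimcfEqualsLiminf} to translate the dimension inequality into the statement
\[
\liminf_{n \to \infty} \frac{K(Y \restr n)}{-\log \mu(Y \restr n)} = \cdim_{CF}(Y) < s.
\]
Third, I would feed this inequality directly into Lemma \ref{lem:dimlessthanliminf} to conclude that $d^*$ succeeds on $Y$, i.e. $\limsup_n d^*(Y \restr n) = \infty$.

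There is essentially no obstacle here, since the heavy lifting, namely the construction of $d^*$ via the convex combination of the $d_{s_i}$ and the verification that it is a lower semicomputable continued fraction $s$-gale, has already been carried out in Lemma \ref{lem:dimlessthanliminf}, and the Kolmogorov-complexity characterization of $\cdim_{CF}$ has been established in Theorem \ref{thm:dimcfEqualsLiminf}. The only point worth articulating explicitly in the write-up is that the same single $d^*$ simultaneously handles \emph{every} $Y$ with $\cdim_{CF}(Y) < s$, and this is immediate from the construction because $d^*$ does not depend on $Y$ and its success is witnessed by membership in $G_{s_i}$ for some $i$ determined by $Y$; in particular, the convex combination ensures that whichever $s_i < s$ happens to witness the liminf condition for a given $Y$, the term $2^{-i} d_{s_i}$ already forces $d^*$ to be unbounded along $Y$.
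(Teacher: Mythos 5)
Your proposal is correct and follows essentially the same route as the paper: the paper's own proof also just applies Theorem \ref{thm:dimcfEqualsLiminf} to convert $\cdim_{CF}(Y)<s$ into the liminf Kolmogorov-complexity condition and then invokes Lemma \ref{lem:dimlessthanliminf} for the single gale $d^*$. Your explicit remark that $d^*$ is independent of $Y$ (uniformity via the convex combination over the $s_i$) is exactly the point the paper relies on implicitly.
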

\begin{proof}
	For all $Y \in \N^*$ such that $\cdim_{CF}(Y) < s$, from Theorem
	\ref{thm:dimcfEqualsLiminf}, it follows that $\liminf\limits_{n
		\rightarrow \infty} \frac{K(Y \upharpoonleft n)}{- \log(\mu(Y
		\upharpoonleft n))}< s$. Now applying Lemma
	\ref{lem:dimlessthanliminf}, we see that the given lemma holds.
\end{proof}

Lutz (Theorem 4.1 in \cite{Lutz2003}) shows that the effective
dimension of a set is precisely the supremum of effective dimensions
of individual elements in the set, that is for all $X \subseteq
[0,1]$, $\cdim(X) = \sup_{S \in X} \cdim(S) $. Using the notion of the
optimal lower semicomputable continued fraction $s$-gale from Lemma
\ref{lem:almostOptimalSgale}, we extend this result to continued
fraction dimension.


\begin{theorem} \label{thm:SetToPoint}
	For all $\mathcal{F} \subseteq [0,1]$, $\cdim_{CF}(\mathcal{F}) =
	\sup_{Y \in \mathcal{F}} \cdim_{CF}(Y)$.
\end{theorem}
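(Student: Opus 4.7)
The plan is to establish the two inequalities separately. The direction $\sup_{Y \in \mathcal{F}} \cdim_{CF}(Y) \leq \cdim_{CF}(\mathcal{F})$ is routine, while the reverse inequality is where the optimal continued fraction $s$-gale $d^*$ of Lemma \ref{lem:almostOptimalSgale} does the heavy lifting. The overall argument mirrors Lutz's Theorem 4.1 in \cite{Lutz2003}, with $d^*$ playing the role of the optimal constructive supergale on Cantor space.

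First I would show $\sup_{Y \in \mathcal{F}} \cdim_{CF}(Y) \leq \cdim_{CF}(\mathcal{F})$ by a monotonicity argument. For any $s \in \mathcal{G}_{CF}(\mathcal{F})$, a witnessing lower semicomputable continued fraction $s$-gale $d$ satisfies $\mathcal{F} \subseteq S^\infty(d)$, so in particular $\{Y\} \subseteq S^\infty(d)$ for each $Y \in \mathcal{F}$. Hence $s \in \mathcal{G}_{CF}(\{Y\})$ for every such $Y$, giving $\cdim_{CF}(Y) \leq s$. Taking the supremum over $Y \in \mathcal{F}$ and then the infimum over $s \in \mathcal{G}_{CF}(\mathcal{F})$ yields the inequality.

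For the reverse direction, set $s^* = \sup_{Y \in \mathcal{F}} \cdim_{CF}(Y)$ and fix an arbitrary $s > s^*$. Then every $Y \in \mathcal{F}$ satisfies $\cdim_{CF}(Y) \leq s^* < s$. Applying Lemma \ref{lem:almostOptimalSgale}, there is a lower semicomputable continued fraction $s$-gale $d^*$ that succeeds on every $Y \in \N^\infty$ with $\cdim_{CF}(Y) < s$; in particular, $d^*$ succeeds on every element of $\mathcal{F}$, so $\mathcal{F} \subseteq S^\infty(d^*)$ and $s \in \mathcal{G}_{CF}(\mathcal{F})$. This gives $\cdim_{CF}(\mathcal{F}) \leq s$, and letting $s \downarrow s^*$ yields $\cdim_{CF}(\mathcal{F}) \leq s^*$.

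There is no substantial obstacle remaining, because the entire mathematical content of the theorem has already been compressed into the optimality lemma, which in turn rests on the Kolmogorov-complexity characterization of Theorem \ref{thm:dimcfEqualsLiminf} and the convex combination $d^* = \sum_{i=1}^{\infty} 2^{-i} d_{s_i}$ of Definition \ref{def:optSgaleD}. The only subtlety to watch for is choosing $s$ strictly greater than $s^*$ (not equal), which is precisely what guarantees the strict inequality $\cdim_{CF}(Y) < s$ required for invoking Lemma \ref{lem:almostOptimalSgale}; taking a decreasing rational sequence $s_n \downarrow s^*$ suffices to close the argument.
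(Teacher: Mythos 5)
Your proposal is correct and follows essentially the same route as the paper: the easy direction via monotonicity of success sets, and the reverse direction by applying the optimal lower semicomputable continued fraction $s$-gale of Lemma \ref{lem:almostOptimalSgale} for any $s$ strictly above the supremum. No gaps; nothing further is needed.
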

\begin{proof} 
	For any $s > \cdim_{CF}(\mathcal{F})$, for all $Y \in \mathcal{F}$
	there exists a lower semicomputable continued fraction $s$-gale that
	succeeds on $Y$. Thus we have $\sup_{Y \in \mathcal{F}} \cdim_{CF}(Y)
	\leq s$.
	
	Take any any $s > \sup_{Y \in \mathcal{F}} \cdim_{CF}(Y)$. It follows
	that for all $Y \in \mathcal{F}$, $\cdim_{CF}(Y) < s$. Therefore from
	Lemma \ref{lem:almostOptimalSgale}, we have that there exists a lower
	semicomputable continued fraction $s$-gale $d^* : \N^* \rightarrow
	\lcro{0}{\infty}$ that succeeds on all $Y \in \mathcal{F}$. Therefore,
	$\cdim_{CF}(\mathcal{F}) \leq s$.
\end{proof}

\section{Reals with unequal Effective Dimension and Effective
	Continued Fraction Dimension}
\label{sec:counterexample}

In this section, we show that for any set of reals
$\mathcal{F}\subseteq [0,1]$, the effective Hausdorff effective dimension of $\mathcal{F}$ 
cannot exceed its effective  continued fraction
dimension. We show that this cannot be improved to an equality. Hence,
this inequality is strict in general. We show this by proving the
existence of a real such that its effective continued fraction
dimension is strictly greater its effective dimension.
\subsection{Effective Hausdorff dimension is at most the effective
	continued fraction dimension}
\begin{theorem} \label{lem:cdimleqcfdim}
	For any $\mathcal{F} \subseteq [0,1]$, $\cdim(\mathcal{F}) \leq
	\cdim_{CF}(\mathcal{F})$.
\end{theorem}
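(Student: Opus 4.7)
The plan is to reduce the statement to Lemma \ref{lem:untitledUsefuLemma}: I would convert a continued fraction gale witnessing $\cdim_{CF}(\mathcal{F}) < s$ into a binary gale witnessing $\cdim(\mathcal{F}) \le s$. Fix any $s > \cdim_{CF}(\mathcal{F})$ and pick an intermediate $s'$ with $\cdim_{CF}(\mathcal{F}) < s' < s$; by definition of $\cdim_{CF}$, there is a lower semicomputable continued fraction $s'$-gale $d:\N^* \to \Rplus$ with $\mathcal{F} \subseteq S^\infty(d)$, identifying each real in $\mathcal{F}$ with its continued fraction expansion. Apply Lemma \ref{lem:untitledUsefuLemma} to $d$ at parameter $s$ to obtain a lower semicomputable binary $s$-gale $h : \Sigma^* \to \Rplus$ and a constant $c_s > 0$ such that $h(b) \geq c_s\, d(v)$ whenever $C_b \cap C_v \neq \emptyset$ and $\tfrac{1}{16}\mu(v) \leq \mu(b) \leq 2\mu(v)$.

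Next, I would verify that $h$ succeeds on the binary expansion $X \in \Sigma^\infty$ of every $\alpha \in \mathcal{F}$, where $Y \in \N^\infty$ is the continued fraction expansion of $\alpha$. For each $n$, write $v_n = Y \restr n$ and $\mu_n = \mu(v_n)$, and choose $k_n \in \N$ with $\tfrac{1}{2}\mu_n \leq 2^{-k_n} \leq \mu_n$. Setting $b_n = X \restr k_n$, the dyadic cylinder $C_{b_n}$ contains $\alpha$, so $C_{b_n} \cap C_{v_n} \neq \emptyset$, and $\mu(b_n) = 2^{-k_n}$ lies in $[\tfrac{1}{2}\mu_n, \mu_n]$, well inside the multiplicative window required by the lemma. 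Hence $h(b_n) \geq c_s\, d(v_n)$, and since $\limsup_n d(v_n) = \infty$ we get $\limsup_n h(b_n) = \infty$, so $X \in S^\infty(h)$. This places $s$ in $\mathcal{G}(\mathcal{F})$, giving $\cdim(\mathcal{F}) \leq s$; letting $s \downarrow \cdim_{CF}(\mathcal{F})$ finishes the proof.

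The only subtlety is the identification of $\mathcal{F} \subseteq [0,1]$ with its continued fraction and binary representatives: irrational reals have unique expansions in both systems, while the countably many rationals form a set of effective dimension zero in either representation and can be discarded without affecting either dimension. The substantive work — converting the non-product, Gauss-measure-based betting into Lebesgue-measure-based binary betting, and absorbing the geometric distortion between continued fraction cylinders and covering dyadic intervals — is already packaged inside Lemma \ref{lem:untitledUsefuLemma}; the remaining conceptual step is the prefix length matching above, and that is essentially immediate because $\tfrac{1}{2}\mu_n \leq 2^{-k_n} \leq \mu_n$ automatically lies inside the tolerance $[\tfrac{1}{16}\mu_n,\, 2\mu_n]$ demanded by the lemma.
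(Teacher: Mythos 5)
Your proposal is correct and follows essentially the same route as the paper: both reduce the theorem to Lemma \ref{lem:untitledUsefuLemma} applied to a continued fraction $s'$-gale succeeding on $\mathcal{F}$, with $s' < s$. The only (harmless) difference is how the qualifying dyadic cylinder is chosen --- you take the prefix $X \restr k_n$ with $\tfrac12\mu(v_n) \le 2^{-k_n} \le \mu(v_n)$, while the paper uses the two covering dyadic intervals from Lemma \ref{lem:binary_covering_cf} and notes one of them is a prefix of $X$; both choices sit inside the lemma's tolerance $[\tfrac{1}{16}\mu(v), 2\mu(v)]$.
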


\begin{proof}
	Let $s > s' >\cdim_{CF}(\mathcal{F})$. By definition, there exists a
	lower semicomputable continued fraction $s'$-gale $d: \N^* \to
	\lcro{0}{\infty}$ such that $\mathcal{F} \subseteq S^\infty[d]$.
	

	Take any $Y \in S^\infty[d]$. Let $X \in \Sigma^\infty$ be the
	corresponding binary representation of $Y$. By definition, for
	any $m \in \N$, there exists an $n \in \N$ such that $d(Y
	\restr n) > m$. Let $v = Y \restr n$.
	
	Using Lemma \ref{lem:binary_covering_cf}, we get two binary
	cylinders $w_1$ and $w_2$ such that $C_v \subseteq C_{w_1}
	\cup C_{w_2}$ such that $\mu(w_1) = \mu(w_2) \leq 2 \mu(v)$. We have that since $v \prefix Y$, $w1 \prefix X$ or $w2 \prefix X$. Without loss of generality assume that $w_1 \prefix X$.
	
	From Lemma \ref{lem:untitledUsefuLemma}, we obtain a lower
	semicomputable $s$-gale $h$ such that $h(w_1) \geq c_s . d(v)
	\geq c_s. m$ for some positive constant $c_s$.
	
	Since $m$ is arbitrary, we see that $h$ succeeds on $X$.
\end{proof}

\subsection{Reals with unequal effective Hausdorff and effective
	continued fraction dimensions} 
We now provide the main construction of the paper, utilizing the
results we show in previous sections.

We first require some technical lemmas about the estimation of
Lebesgue measure of a continued fraction cylinder in terms of digits of the continued fraction. Some of
the bounds derived in this section may be of independent interest.

In combinatorial arguments, the Gauss measure is often difficult to
deal with directly, and it is convenient to use the Lebesgue measure,
and derive inequalities.

The following equation, Proposition 1.2.7 in Kraaikamp and Iosifescu
\cite{KraaikampIosifescu}, is extremely useful in deriving an estimate for the
Lebesgue measure of continued fraction cylinders. We derive consequences
of this Lemma below, and these are crucial in estimating the dimension
of the continued fraction we construct in Section
\ref{sec:counterexample}. Note that the bounds for Gauss measure are
not simple to derive directly.

\begin{lemma} [Kraaikamp, Iosifescu
	\cite{KraaikampIosifescu}]\label{lem:Kraaikamp} 
	For any $v = [a_1, \dots a_n]$ and $i \in
	\N$, $$\frac{\mu([v,i])}{\mu([v])} = \frac{s_n + 1}{(s_n + i)(s_n + i
		+ 1)}$$ where $s_n = [a_n , \dots a_1]$ is the rational
	corresponding to the reverse of string $v$.
\end{lemma}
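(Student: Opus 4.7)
The plan is to prove this classical identity by reducing to the standard formula for the Lebesgue measure of a continued fraction cylinder in terms of the denominators of its convergents, and then recognizing the "reversed continued fraction" structure that naturally emerges from the ratio of those denominators.

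First, I would recall the convergent recurrences for $v = [a_1, \dots, a_n]$: with $q_{-1} = 0$, $q_0 = 1$, and $q_k = a_k q_{k-1} + q_{k-2}$, the endpoints of the cylinder $C_v$ are $p_n/q_n$ and $(p_n + p_{n-1})/(q_n + q_{n-1})$. Combined with the determinant identity $p_n q_{n-1} - p_{n-1} q_n = (-1)^n$, this gives
\begin{equation*}
\mu([v]) \;=\; \frac{1}{q_n(q_n + q_{n-1})}.
\end{equation*}
Applying the same formula to $[v,i]$, whose $(n{+}1)$-th denominator is $q_{n+1} = i\,q_n + q_{n-1}$, yields
\begin{equation*}
\mu([v,i]) \;=\; \frac{1}{(i\,q_n + q_{n-1})\bigl((i+1)q_n + q_{n-1}\bigr)}.
\end{equation*}

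Next, I would take the ratio $\mu([v,i])/\mu([v])$ and divide numerator and denominator by $q_n^2$, obtaining
\begin{equation*}
\frac{\mu([v,i])}{\mu([v])} \;=\; \frac{1 + q_{n-1}/q_n}{(i + q_{n-1}/q_n)(i + 1 + q_{n-1}/q_n)}.
\end{equation*}
It now suffices to identify $q_{n-1}/q_n$ with the ``reversed'' continued fraction $s_n = [a_n, a_{n-1}, \dots, a_1]$. This follows by a short induction: the recurrence $q_k = a_k q_{k-1} + q_{k-2}$ can be rewritten as $q_{k-1}/q_k = 1/(a_k + q_{k-2}/q_{k-1})$, which unrolls exactly into the continued fraction $[a_n, a_{n-1}, \dots, a_1]$ when applied from $k = n$ down to $k = 1$ (using the base case $q_0/q_1 = 1/a_1$). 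Substituting $s_n = q_{n-1}/q_n$ into the ratio above gives the stated expression $(s_n+1)/((s_n+i)(s_n+i+1))$.

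The main step is conceptual rather than computational: recognizing that the tail ratio $q_{n-1}/q_n$ of the convergent denominators is itself a finite continued fraction in the reversed digits. Once this is isolated, the rest is algebraic simplification using only the standard convergent recurrences and the classical formula $\mu([v]) = 1/(q_n(q_n+q_{n-1}))$. Since both ingredients are textbook facts about continued fractions, no genuine obstacle arises; the only care needed is in handling the index conventions for $p_n, q_n$ and the determinant identity consistently.
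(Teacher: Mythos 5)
Your proof is correct. Note that the paper itself offers no argument for this lemma: it is quoted directly as Proposition 1.2.7 of Kraaikamp and Iosifescu, so there is no in-paper proof to compare against. Your derivation is the standard one and it is complete: the cylinder-length formula $\mu([v]) = 1/(q_n(q_n+q_{n-1}))$ together with $q_{n+1} = i\,q_n + q_{n-1}$ gives the ratio, and the only genuinely structural step — the mirror identity $q_{n-1}/q_n = [a_n, a_{n-1}, \dots, a_1]$, proved by unrolling $q_{k-1}/q_k = 1/(a_k + q_{k-2}/q_{k-1})$ from the base case $q_0/q_1 = 1/a_1$ — is handled correctly and matches the paper's convention that $[a_1,a_2,\dots]$ denotes $1/(a_1 + 1/(a_2+\cdots))$. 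The two ingredients you take as known (the endpoint description of $C_v$ via $p_n/q_n$ and $(p_n+p_{n-1})/(q_n+q_{n-1})$, and the determinant identity $p_nq_{n-1}-p_{n-1}q_n=(-1)^n$) are indeed textbook facts, so leaving them uncited-but-standard is reasonable; the parity of which endpoint is left or right is irrelevant since only the absolute length enters.
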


The lemma given above gives the following bounds on the Lebesgue measure of
a continued fraction cylinder in terms of the digits of the continued
fraction.

\begin{lemma} \label{lem:KraaikampExtended}
	For any $v = [a_1, \dots a_k ] \in \N^k$ we have 
	$$\prod\limits_{i =
		1}^k \frac{1}{(a_i + 1)(a_i + 2)}
	\ \leq\ \mu(v)\ \leq\ \prod\limits_{i = 1}^k \frac{2}{a_i \; (a_i +
		1)}
	$$
\end{lemma}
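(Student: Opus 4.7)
The plan is to telescope Lemma \ref{lem:Kraaikamp} into a product formula for $\mu(v)$ and then bound each factor separately by elementary inequalities. First I would iterate the identity in Lemma \ref{lem:Kraaikamp} starting from $\mu(\lambda) = 1$, taking $s_0 = 0$ as the reverse of the empty string; extending one digit at a time then yields the telescoping expression
\begin{equation*}
\mu([a_1, \dots, a_k]) \;=\; \prod_{j=1}^{k} \frac{s_{j-1}+1}{(s_{j-1}+a_j)(s_{j-1}+a_j+1)},
\end{equation*}
where $s_j = [a_j, a_{j-1}, \dots, a_1]$ for $j \geq 1$.

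The crucial observation I would use next is that each reverse rational $s_j$ with $j \geq 1$ is itself a continued fraction $[a_j, \dots]$ with partial quotients in $\mathbb{N}$, hence $0 < s_j \leq 1$; together with $s_0 = 0$, every $s_{j-1}$ that appears in the product lies in $[0,1]$. This is the only ``content'' the argument needs beyond Lemma \ref{lem:Kraaikamp}.

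Given this range, I would bound each factor individually. For the upper bound, I replace $s_{j-1}+1 \leq 2$ in the numerator and drop the nonnegative contribution of $s_{j-1}$ in the two denominator terms, getting an upper bound of $\frac{2}{a_j(a_j+1)}$ per factor. For the lower bound, I use $s_{j-1}+1 \geq 1$ and bound the denominator factors above by $a_j+1$ and $a_j+2$ respectively, giving $\frac{1}{(a_j+1)(a_j+2)}$ per factor. Multiplying across $j$ yields the two claimed inequalities. There is no substantive obstacle here; the only care needed is confirming the base case $s_0 = 0$ behaves correctly and that the reverse continued fraction rationals genuinely lie in $[0,1]$, both of which are immediate.
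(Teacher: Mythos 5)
Your proposal is correct and takes essentially the same route as the paper: the paper proves the lemma by induction on $k$, applying Lemma \ref{lem:Kraaikamp} one digit at a time and bounding the single factor $\frac{s_{k-1}+1}{(s_{k-1}+a_k)(s_{k-1}+a_k+1)}$ using $s_{k-1}\in[0,1]$, which is exactly the inductive form of your telescoped product with the same per-factor estimates (the paper handles $k=1$ directly via $\mu([a_1])=\frac{1}{a_1(a_1+1)}$ rather than through $s_0=0$, a purely cosmetic difference).
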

\begin{proof} 
	If $k = 1$, then $\mu(v) = \frac{1}{a_1 (a_1+1)}$, therefore the lemma
	holds. For $k>1$, let $v = [a_1, \dots a_{k-1}]$ and consider $[v,a_k]$.
	From Lemma \ref{lem:Kraaikamp}, it follows that $$\mu([v,a_k]) =
	\frac{s_{k-1} + 1}{(s_{k-1} + a_k ) (s_{k-1} + a_k + 1)} \;\;
	\mu(v),$$ where $s_{k-1} = [a_{k-1}, \dots a_1]$. Since $s_{k-1} \in
	[0,1]$, it follows that
	
	$$\frac{1}{(a_k +1) (a_{k} + 2)} \;\mu(v)\leq \mu([v,a_k]) \leq
	\frac{2}{(a_k ) (a_{k} + 1)} \; \mu(v)$$
	
	Therefore, by induction on $k$, the lemma holds.
\end{proof}


\begin{lemma}\label{lem:KraaikampExtended2}
	Let $v = [a_1 \dots a_k ] \in \N^*$. Then for any $a,b \in \N$ such
	that $b>a$,
	\begin{align*}\mu\left(\;\bigcup\limits_{i = a}^b \;[v, i]\;\right) \leq \frac{2}{a}\; \prod\limits_{i = 1}^k \frac{2}{a_i \; (a_i + 1)}
	\end{align*}
\end{lemma}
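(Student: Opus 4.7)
The plan is to combine the exact expression for the ratio $\mu([v,i])/\mu(v)$ from Lemma \ref{lem:Kraaikamp} with the upper bound on $\mu(v)$ from Lemma \ref{lem:KraaikampExtended}, and to handle the union through disjointness plus a telescoping sum.

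First, I would observe that the cylinders $C_{[v,i]}$ for distinct values of $i$ are pairwise disjoint (they correspond to disjoint intervals determined by different choices of the next continued-fraction digit), so
\begin{align*}
\mu\!\left(\bigcup_{i=a}^{b}[v,i]\right) \;=\; \sum_{i=a}^{b}\mu([v,i]) \;=\; \mu(v)\sum_{i=a}^{b}\frac{s_k+1}{(s_k+i)(s_k+i+1)},
\end{align*}
where $s_k=[a_k,\dots,a_1]$, by Lemma \ref{lem:Kraaikamp}.

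Next, I would exploit the partial-fraction identity $\frac{1}{(s_k+i)(s_k+i+1)} = \frac{1}{s_k+i}-\frac{1}{s_k+i+1}$ so that the sum telescopes:
\begin{align*}
\sum_{i=a}^{b}\frac{s_k+1}{(s_k+i)(s_k+i+1)} \;=\; (s_k+1)\!\left(\frac{1}{s_k+a}-\frac{1}{s_k+b+1}\right) \;\leq\; \frac{s_k+1}{s_k+a}.
\end{align*}
Since $s_k\in[0,1]$, we have $s_k+1\leq 2$ and $s_k+a\geq a$, giving $\frac{s_k+1}{s_k+a}\leq\frac{2}{a}$. Hence $\mu\!\left(\bigcup_{i=a}^{b}[v,i]\right)\leq \frac{2}{a}\,\mu(v)$.

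Finally, I would invoke the upper bound $\mu(v)\leq \prod_{i=1}^{k}\frac{2}{a_i(a_i+1)}$ from Lemma \ref{lem:KraaikampExtended} to conclude. There is no real obstacle here; the only subtlety worth double-checking is that $s_k\in[0,1]$ (which follows from the definition of $s_k$ as a finite continued fraction starting with a positive integer, so its value lies in $(0,1]$, and the empty-reverse case $k=0$ is handled by the convention $s_0=0$), ensuring both the $(s_k+1)\leq 2$ and $(s_k+a)\geq a$ estimates used above.
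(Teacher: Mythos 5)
Your proposal is correct and follows essentially the same route as the paper's proof: apply Lemma~\ref{lem:Kraaikamp} to each cylinder $[v,i]$, bound the resulting sum by $\frac{2}{a}$ via the telescoping partial-fraction estimate using $s_k\in[0,1]$, and finish with the bound on $\mu(v)$ from Lemma~\ref{lem:KraaikampExtended}. The only cosmetic difference is that you telescope the exact sum before estimating, whereas the paper first bounds each term by $\frac{2}{i(i+1)}$ and then sums; the two are interchangeable.
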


\begin{proof} 
	From Lemma \ref{lem:Kraaikamp}, it follows that
	$$\frac{\mu([v,i])}{\mu(v)} = \frac{s_k + 1}{(s_k + i)(s_k + i + 1)} $$
	where $s_k \in [0,1]$. Therefore 
	$$ \mu\Big(\;\bigcup\limits_{i = a}^b \;[v, i]\;\Big) \leq
	\sum\limits_{i=a}^{b}\frac{2}{i(i+1)} \mu(v) \leq \frac{2}{a} \;
	\mu(v).$$ From Lemma \ref{lem:KraaikampExtended}, $\mu(v) \leq
	\prod\limits_{i = 1}^k \frac{2}{a_i \; (a_i + 1)}$. Therefore this
	lemma holds.
\end{proof} 

The following lemma is a direct constructive extension of the proof by Lutz \cite{Lutz03}. Using this technique, we convert a set of computably
enumerable prefix free binary covers into a lower semicomputable binary $s$-gale.

\begin{lemma} [Lutz \cite{Lutz03}]\label{lem:scoverTosgale}
	For all $n \in \N$, and $\F \subseteq [0,1]$, if there is a computably
	enumerable prefix free binary cover $\{B^n_i\}$ of $\F$, such that
	$\sum_i|B_i^n|^s < 2^{-n}$, then there exists a lower semicomputable binary
	$s$-gale that succeeds on $\F$.
\end{lemma}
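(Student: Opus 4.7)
The plan is to distill each cover $\{B_i^n\}_i$ into a non-negative $s$-gale $d_n$ with very small initial capital that nevertheless attains value at least one somewhere along every $X \in \F$, and then to combine the $d_n$ into a single $s$-gale $d$ whose capital along $X$ is forced to exceed $n$ for every $n$, giving $\limsup_m d(X \restr m) = \infty$. The core ingredient is, for a prefix-free set $A \subseteq \Sigma^*$ and each $u \in A$, the \emph{unit $s$-gale concentrated on $u$} defined by
\begin{align*}
f_u(w) = \begin{cases} 2^{s(|w|-|u|)} & \text{if } w \prefix u, \\ 2^{(s-1)(|w|-|u|)} & \text{if } u \prefix w, \\ 0 & \text{otherwise.} \end{cases}
\end{align*}
A short case analysis on whether $w$ is a proper prefix of $u$, equals $u$, is a proper extension of $u$, or is incomparable with $u$ verifies the identity $2^s f_u(w) = f_u(w0) + f_u(w1)$, so $f_u$ is a binary $s$-gale with $f_u(\lambda) = 2^{-s|u|}$ and $f_u(u) = 1$. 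Setting $d_A(w) = \sum_{u \in A} f_u(w)$ preserves the $s$-gale identity by linearity and yields $d_A(\lambda) = \sum_{u \in A} 2^{-s|u|}$ together with $d_A(u) \geq 1$ for every $u \in A$; if $A$ is c.e., then $d_A$ is lower semicomputable via monotone partial sums over an enumeration of $A$.

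Applying this to each cover $A_n = \{B_i^n\}_i$ supplied by the hypothesis (treated as a uniformly c.e. family), write $d_n = d_{A_n}$; then $d_n(\lambda) < 2^{-n}$ and the sequence $(d_n)_{n \in \N}$ is uniformly lower semicomputable. Define
\begin{align*}
d(w) = \sum_{n=1}^{\infty} n \cdot d_n(w).
\end{align*}
Summing the $s$-gale identity for $d_n$ across strings of length $|w|$ gives the pointwise bound $d_n(w) \leq 2^{s|w|} d_n(\lambda) \leq 2^{s|w| - n}$, so $d(w) \leq 2^{s|w|} \sum_n n \cdot 2^{-n}$ is finite at every $w$, with $d(\lambda) < 2$. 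Hence $d$ is a well-defined non-negative binary $s$-gale, and it inherits lower semicomputability from the $d_n$ since it is a uniformly lower semicomputable sum of non-negative terms with summable initial values. For any $X \in \F$ and any $n \in \N$, the covering property produces some $u_n \in A_n$ with $u_n \prefix X$, and at that prefix $d(u_n) \geq n \cdot d_n(u_n) \geq n \cdot f_{u_n}(u_n) = n$. Hence $\limsup_m d(X \restr m) = \infty$, i.e., $X \in S^\infty(d)$.

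The substantive computation is the verification that each $f_u$ satisfies the $s$-gale identity; once that is in hand, the remaining steps (linearly combining gales, rescaling with the weights $n$, and bookkeeping for lower semicomputability) are routine, so I do not anticipate a significant obstacle beyond that case analysis and a careful unpacking of what ``uniformly computably enumerable'' means for the given family of covers.
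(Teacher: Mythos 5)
Your proposal is correct and follows essentially the same route as the paper: your $d_{A_n}=\sum_{u\in A_n}f_u$ is exactly the paper's case-defined gale $d_n$ (the unit-gale decomposition just makes the verification modular), and the final weighted sum $\sum_n n\,d_n$ plays the same role as the paper's $\sum_n 2^n d_{2n}$, forcing capital at least $n$ along every $X\in\F$. The only differences are cosmetic (choice of weights and the explicit uniformity remark), so no gap to report.
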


\begin{proof} 
	For an $n \in N$, define an $s$-gale $d_n: \{0,1\}^* \rightarrow [0,\infty)$ as
	follows:
	\medskip
	
	If $\exists \; v \prefix w$ such that $v \in \{B^i_n\}$, then 
	$$d_n(w) = \frac{\mu^s(v)}{\mu^s(w)} \frac{\mu(w)}{\mu(v)}$$
	
	Otherwise,
	$$d_n(w) = \frac{1}{\mu^s(w)}{ \sum_{v \in A_r \; ; \; w \sqsubset v } \mu^s(v)}$$
	
	Finally define $$d(w) = \sum_{n=0}^{\infty}2^n d_{2n}(w).$$
	
	It is straightforward to verify that $d_n$ is an $s$-gale. It follows that $d_n(\lambda) \leq  2^{-n}$, hence $d(\lambda) \leq 1 $.  Since $\{B^i_n\}$ is computably enumerable, $d$ is lower semicomputable. Now, for all $ X \in F$ and  $n \in \N$, there exists a prefix $w \prefix X$ such that $w \in \{B^i_n\}$. From the definition of $d$, we can see that $\forall w \in \{B^i_n\}$, $d_n(w) = 1$.
	Therefore, $d(w) \geq 2^n d_{2n}(w) = 2^n.$ Since $n$ is arbitrary, the $s$-gale d succeeds on $F$. 
\end{proof} 

We now proceed to show the construction of the set $\F$. The definition uses a parameter $\mathbf{s}$. We later go on to show that for all such $\F$, there exists an element $Y \in \mathcal{F}$ such that the
effective continued fraction dimension of $Y$ is greater than $0.5$. We also go on to show that $cdim(F) \leq \mathbf{s}$.

We first provide the stage-wise construction of a set $\mathcal{F}_k \subseteq [0,1]$, such that for each $k \in \N$ $\mathcal{F}_{k+1} \subseteq \mathcal{F}_k$. We then define the set $\mathcal{F}$ using an infinite intersection of the sets $\mathcal{F}_k$. 

\begin{definition} \label{def:mathcalF}
	Let $0 < \mathbf{s} < 0.5$. Define $a_1 = 1$. For any
	$k \in \mathbb{N}$, such that $k>1$, define $a_k$ inductively as:
	$$
	a_k = 2\bigg(k  \prod\limits_{i = 1}^{k-1} 100
	a_i\bigg)^{1/\mathbf{s}}.
	$$

	For any $k \in \mathbb{N}$, define $b_k = 50.a_k$. Take $\mathcal{F}_0
	= \lambda$.
	
	Let $\mathcal{F}_k = \{[v_1 \dots v_k] \in \N^k \text{ such that } v_i
	\in [a_i , b_i] \text{ for } i \in 1 \text{ to } k\}$. Finally define
	$$\mathcal{F} = \bigcap\limits_{k = 1}^\infty \mathcal{F}_k.$$
\end{definition}


We use the bounds obtained from Lemma \ref{lem:Kraaikamp}, along with
basic properties of harmonic numbers to prove the following property
of measures of continued fraction sub cylinders.

\begin{lemma}
	\label{lem:sMeasureIncreases}
	For any $x \in \N^*$ , $s \leq 0.5$ and $a_k,b_k \in N$ such that $b_k
	= 50.a_k$, \\$\sum\limits_{i = a_k}^{b_k} \gamma^s([x,i]) > c
	\gamma^s([x])$ for some $c > 1$.
\end{lemma}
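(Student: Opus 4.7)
The plan is to reduce everything to Lebesgue measure estimates via Lemma \ref{lem:lebesgue_gauss}, then apply the explicit ratio from Lemma \ref{lem:Kraaikamp} digit-by-digit, and finally exploit the fact that $b_k = 50\,a_k$ gives a harmonic-like sum of length $\log 50$, which is comfortably larger than $1$ after the $s \le 0.5$ scaling.

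\medskip

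First, I would use Lemma \ref{lem:lebesgue_gauss} to reduce the inequality to one about the Lebesgue measure. Since $\gamma([x,i]) \ge \frac{1}{2\ln 2}\mu([x,i])$ and $\gamma([x]) \le \frac{1}{\ln 2}\mu([x])$, one obtains
\begin{equation*}
\frac{\gamma^s([x,i])}{\gamma^s([x])} \;\ge\; \left(\tfrac{1}{2}\right)^s \left(\frac{\mu([x,i])}{\mu([x])}\right)^s.
\end{equation*}
Next, Lemma \ref{lem:Kraaikamp} gives $\mu([x,i])/\mu([x]) = (s_n+1)/((s_n+i)(s_n+i+1))$ with $s_n \in [0,1]$, so that $\mu([x,i])/\mu([x]) \ge 1/((i+1)(i+2))$.

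\medskip

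Putting the two bounds together and summing from $a_k$ to $b_k = 50 a_k$ reduces the task to showing
\begin{equation*}
\left(\tfrac{1}{2}\right)^s \sum_{i=a_k}^{50 a_k} \frac{1}{((i+1)(i+2))^s} \;>\; c
\end{equation*}
for some constant $c > 1$ uniform in $x$ and $a_k \ge 1$. Using $s \le 0.5$, each summand satisfies $((i+1)(i+2))^{-s} \ge ((i+1)(i+2))^{-1/2} \ge 1/(i+2)$, and the factor $(1/2)^s$ is at least $1/\sqrt{2}$. Comparing the sum with an integral,
\begin{equation*}
\sum_{i=a_k}^{50 a_k} \frac{1}{i+2} \;\ge\; \int_{a_k}^{50a_k} \frac{dx}{x+3} \;=\; \ln\!\frac{50 a_k + 3}{a_k + 3}.
\end{equation*}

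\medskip

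The final step is to check that the right-hand side times $1/\sqrt{2}$ exceeds $1$ uniformly for all $a_k \ge 1$. The function $a_k \mapsto \ln\!\frac{50a_k+3}{a_k+3}$ has positive derivative (its derivative equals $147/((50a_k+3)(a_k+3))$), so it is minimized at $a_k = 1$, where it equals $\ln(53/4) \approx 2.58 > \sqrt{2}$. Hence $c := \frac{1}{\sqrt{2}}\ln(53/4) > 1$ works uniformly, completing the proof. The only subtlety is the uniformity in $a_k$; everything else is a mechanical assembly of Lemmas \ref{lem:Kraaikamp} and \ref{lem:lebesgue_gauss}. No obstacle of a conceptual nature arises beyond arranging the constants so that the $\log 50$ gain survives the $s \le 0.5$ power.
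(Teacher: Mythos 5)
Your proposal is correct and follows essentially the same route as the paper: reduce the Gauss-measure ratio to the Lebesgue one via Lemma \ref{lem:lebesgue_gauss}, apply the ratio from Lemma \ref{lem:Kraaikamp} with $s_n \in [0,1]$, use $s \le 0.5$ to drop to a sum of $1/(i+2)$ over $[a_k, 50a_k]$, and conclude that the resulting logarithmic gain exceeds $1$ uniformly. The only cosmetic difference is that you lower-bound the harmonic-type sum by an integral and a monotonicity check, while the paper uses the bounds $\ln n \le H_n \le \ln n + 1$ to get $0.5(\ln 25 - 1) > 1$; both yield a uniform constant $c > 1$.
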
	

\begin{proof}
	From Lemma \ref{lem:Kraaikamp}, it follows that 
	\begin{align*}
		\sum\limits_{a_n}^{b_k} \frac{\mu^s([x,i])}{\mu^s([x])} &=  \sum\limits_{a_n}^{b_k} \bigg(\frac{s_k + 1}{(s_k+i)(s_k + i + 1)}\bigg)^s\\	
		& \geq \sum\limits_{a_k}^{b_k} \bigg(\frac{1}{(i+1)(i+2)}\bigg)^s
	\end{align*}
	
	The second inequality follows from the fact that $s_k \in [0,1]$. 
	
	Using Lemma \ref{lem:lebesgue_gauss}, we get that
	\begin{align*}
		\sum\limits_{a_n}^{b_k} \frac{\gamma^s([x,i])}{\gamma^s([x])} 	
		& \geq \sum\limits_{a_k}^{b_k} \bigg( \frac{1}{2(i+1)(i+2)}\bigg)^s
	\end{align*}
	
	Putting $b_k = 50 a_k$ and $s \leq 0.5$, we get
	\begin{align*}
		\sum\limits_{a_n}^{b_k}
		\frac{\gamma^s([x,i])}{\gamma^s([x])}  & \geq
		\frac{1}{2} \sum_{a_k }^{50 a_k } \frac{1}{i + 2} \\ &
		= 0.5(H(50 a_k + 2) - H(a_k + 1)). 
	\end{align*} 
	
	where $H_n$ is the $n^{th}$ Harmonic number. From the fact that $\ln n \leq H_n \leq \ln n + 1$, we have 
	\begin{align*}
		H(50 a_k + 2) - H(a_k + 1)  & \geq \ln (50.a_k) - \ln(2.a_k) - 1\\
		& = \ln(25) - 1.
	\end{align*} 
	
	The lemma holds as $0.5(\ln 25 - 1)$ is greater than $ 1$.
\end{proof} 

Using the bound derived above, we show that for $s=0.5$, the optimal $s$-gale $d^*$
formulated in Section \ref{sec:optSgale} does not succeed on some
sequence in $Y \in \mathcal{F}$. Using this we establish that $\cdim_{CF}(Y) \geq 0.5$.

\begin{lemma} \label{lem:dimYinCFgt0.5}
	There exists a $Y \in \mathcal{F}$ such that $\cdim_{CF}(Y) \geq 0.5$.
\end{lemma}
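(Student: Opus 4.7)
The plan is to diagonalize against the optimal lower semicomputable continued fraction $s$-gale $d^*$ from Lemma \ref{lem:almostOptimalSgale} taken at $s = 0.5$. Since $d^*$ succeeds on every $Y \in \N^\infty$ with $\cdim_{CF}(Y) < 0.5$, it suffices to exhibit a single $Y \in \mathcal{F}$ along which the capital of $d^*$ fails to diverge; the contrapositive of Lemma \ref{lem:almostOptimalSgale} then yields $\cdim_{CF}(Y) \geq 0.5$.

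The core ingredient is an averaging step. Fix a prefix $x \in \N^{k-1}$ and write $D = d^*(x)$. Applying the $s$-gale equation at $s = 0.5$,
\begin{align*}
D \, \gamma^{0.5}(x) \;=\; \sum_{i \in \N} d^*([x,i]) \, \gamma^{0.5}([x,i]) \;\geq\; \sum_{i = a_k}^{b_k} d^*([x,i]) \, \gamma^{0.5}([x,i]),
\end{align*}
together with Lemma \ref{lem:sMeasureIncreases}, which supplies a constant $c > 1$ (independent of $x$ and $k$) such that $\sum_{i=a_k}^{b_k} \gamma^{0.5}([x,i]) > c \, \gamma^{0.5}(x)$, I would argue by contradiction: if $d^*([x,i]) > D/c$ held for every $i \in [a_k, b_k]$, the right-hand side of the display would strictly exceed $D \, \gamma^{0.5}(x)$. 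Hence there must exist some $i^{*} \in [a_k, b_k]$ with $d^*([x, i^{*}]) \leq D/c$.

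Iterating the averaging step inductively, starting from $\lambda$ and at each stage $k$ extending the current prefix by the index $a_k^Y := i^{*} \in [a_k, b_k]$ guaranteed above, produces $Y = [a_1^Y, a_2^Y, \dots]$. By construction $a_k^Y \in [a_k, b_k]$ for every $k$, so $Y \in \mathcal{F}$, and $d^*(Y \restr k) \leq d^*(\lambda)/c^{k}$. Since $c > 1$, we have $d^*(Y \restr k) \to 0$; in particular $\limsup_n d^*(Y \restr n) < \infty$, so $Y \notin S^\infty(d^*)$. By Lemma \ref{lem:almostOptimalSgale}, this forces $\cdim_{CF}(Y) \geq 0.5$.

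The only delicate point is that the constant $c$ from Lemma \ref{lem:sMeasureIncreases} must be genuinely uniform across all stages $k$ and all admissible prefixes $x$, since the induction composes the bound $D \mapsto D/c$ through every level; inspecting that lemma, the constant depends only on the fixed ratio $b_k/a_k = 50$ (and on $s \leq 0.5$), so uniformity is immediate and the recursion goes through cleanly.
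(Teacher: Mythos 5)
Your proposal is correct and follows essentially the same route as the paper: diagonalizing against the optimal gale $d^*$ at $s=0.5$, using Lemma \ref{lem:sMeasureIncreases} and the $s$-gale identity to find, at each stage, an extension in $[a_k,b_k]$ whose capital drops by a factor of $c>1$, and concluding via the contrapositive of Lemma \ref{lem:almostOptimalSgale}. Your explicit remark on the uniformity of $c$ across stages is a sound (and worthwhile) observation that the paper leaves implicit.
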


\begin{proof} Let $s = 0.5$.
	Consider the continued fraction $s$-gale $d^*$ from Lemma
	\ref{lem:almostOptimalSgale}. It follows that for all $Y \in \N^*$, if
	$d^*$ does not succeed on any $Y \in \N^*$, then
	$\cdim_{CF}(Y) \geq s$.
	
	Consider any $v \in \N^*$, let $rank(v) = k$. 
	From lemma
	\ref{lem:sMeasureIncreases}, we have that for some $c >1$, $\sum\limits_{i = a_k}^{b_k} \gamma^s([v,i]) > c . \gamma^s([v]).$
	
	Now if $\forall i \in [a_k,b_k]$, $d^*([v,i]) \geq \frac{1}{c}.d^*(v)$, from the $s$-gale condition it follows that $d^*(v) \gamma^s(v) \geq \frac{d^*(v)}{c} \sum\limits_{i = a_k}^{b_k} \gamma^s([v,i]) > d^*(v) \gamma^s(v)$, which is a contradiction.
	
	Therefore,
	it follows that for all $v \in \N^*$, there exists some $n_v
	\in [a_k,b_k]$ such that $d^*([v,i]) < \frac{1}{c} .
	d^*([v])$.
	
	Let $v_0 = \lambda$, for each $i \in \N$, we define $v_i = [v_{i-1},
	n_{v_{i-1}}]$. Take $Y = \cap_{i = 1}^{\infty} C_{v_i}$, it follows
	that $Y \in F$. Taking $d^*(\lambda) = k$ we get $d^*(Y
	\restr n) < \frac{k}{c^n}$. Therefore the continued fraction $s$-gale
	$d^*$ does not succeed on $Y$. Hence $\cdim_{CF}(Y) \geq 0.5$.
\end{proof}

From this, it follows that the constructive dimension of the entire set $\mathcal{F}$ is also greater than or equal to 0.5. 
\begin{lemma}
	$\cdim_{CF}(\mathcal{F}) \geq 0.5$.
\end{lemma}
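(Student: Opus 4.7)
The plan is to observe that this statement is an immediate consequence of the two results established just before it, namely Lemma \ref{lem:dimYinCFgt0.5} and Theorem \ref{thm:SetToPoint}. Lemma \ref{lem:dimYinCFgt0.5} produces a specific element $Y \in \mathcal{F}$ whose effective continued fraction dimension is at least $0.5$, while Theorem \ref{thm:SetToPoint} characterizes the effective continued fraction dimension of an arbitrary subset of $[0,1]$ as the supremum of the effective continued fraction dimensions of its elements.

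Combining these, I would argue as follows. By Theorem \ref{thm:SetToPoint}, we have
\begin{align*}
\cdim_{CF}(\mathcal{F}) \;=\; \sup_{Z \in \mathcal{F}} \cdim_{CF}(Z).
\end{align*}
By Lemma \ref{lem:dimYinCFgt0.5}, this supremum is witnessed by a concrete $Y \in \mathcal{F}$ with $\cdim_{CF}(Y) \geq 0.5$, hence
\begin{align*}
\cdim_{CF}(\mathcal{F}) \;\geq\; \cdim_{CF}(Y) \;\geq\; 0.5,
\end{align*}
which is exactly the conclusion.

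There is essentially no obstacle here, since both ingredients are already proved in the excerpt; the only care needed is to cite the set-to-point principle for continued fraction dimension (Theorem \ref{thm:SetToPoint}) rather than appealing to any corresponding binary statement, because the point-wise witness $Y$ was obtained via the \emph{continued fraction} optimal lower semicomputable $s$-gale $d^*$ from Section \ref{sec:optSgale}. Thus the proof is essentially a one-line invocation and no further combinatorial work on the cylinder measures is required.
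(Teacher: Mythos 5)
Your proof is correct and follows exactly the paper's own argument: invoke Theorem \ref{thm:SetToPoint} to reduce the set dimension to the supremum over points, then use the witness $Y \in \mathcal{F}$ with $\cdim_{CF}(Y) \geq 0.5$ from Lemma \ref{lem:dimYinCFgt0.5}. Nothing further is needed.
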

\begin{proof}
	From Theorem \ref{thm:SetToPoint}, we get that
	$\cdim_{CF}(\mathcal{F}) = \sup_{Y \in \mathcal{F}} \cdim_{CF}(Y)$.
	From Lemma \ref{lem:dimYinCFgt0.5}, it follows that there exists a $Y
	\in F$ such that $\cdim_{CF}(Y) \geq 0.5$. Combining these two, we get
	that $\cdim_{CF}(\mathcal{F}) \geq 0.5$.
\end{proof}

Now we show that the effective Hausdorff dimension of all points in the set
$\mathcal{F}$ is less than $\mathbf{s}$. Using ideas from \cite{Albeverio2017}, we devise a set of covers $S_k$ for $\mathcal{F}$, by combining adjacent continued fraction cylinders into a single cover. 

Using the bounds derived on Lebesgue measure of continued fraction cylinders, we show that for the set of covers $S_k$ for $\mathcal{F}$,
the $\mathbf{s}$-dimensional Hausdorff measure shrinks
arbitrarily small.

\begin{lemma} \label{lem:smeasureF0}
	For $k \in \N$, let $S_k = \{\ \bigcup\limits_{i = a_k}^{b_k} \; [v,i]
	\text{ for } v \in \mathcal{F}_{k-1}\}$. Then, $ \sum\limits_{S \in
		S_k} \mu^{\mathbf{s}}(S) \leq 1/k$.
\end{lemma}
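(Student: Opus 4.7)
The plan is to bound each member of the cover using Lemma \ref{lem:KraaikampExtended2}, then sum. For each $v = [v_1, \dots, v_{k-1}] \in \mathcal{F}_{k-1}$, Lemma \ref{lem:KraaikampExtended2} gives
\begin{align*}
\mu\left(\bigcup_{i=a_k}^{b_k} [v,i]\right) \leq \frac{2}{a_k} \prod_{i=1}^{k-1} \frac{2}{v_i(v_i+1)}.
\end{align*}
Raising both sides to the power $\mathbf{s}$ and summing over all admissible tuples $(v_1,\dots,v_{k-1})$ gives the factorisation
\begin{align*}
\sum_{S\in S_k} \mu^{\mathbf{s}}(S) \leq \Bigl(\tfrac{2}{a_k}\Bigr)^{\mathbf{s}} \prod_{i=1}^{k-1} \sum_{v_i=a_i}^{b_i} \left(\frac{2}{v_i(v_i+1)}\right)^{\mathbf{s}}.
\end{align*}

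Next I would estimate each inner sum using the crude bound $v_i(v_i+1) \geq a_i^2$ (since $v_i \geq a_i$) together with the fact that there are at most $b_i - a_i + 1 \leq b_i = 50 a_i$ terms. This yields
\begin{align*}
\sum_{v_i=a_i}^{b_i} \left(\frac{2}{v_i(v_i+1)}\right)^{\mathbf{s}} \leq 50 a_i \cdot \frac{2^{\mathbf{s}}}{a_i^{2\mathbf{s}}} = 50 \cdot 2^{\mathbf{s}} \cdot a_i^{1-2\mathbf{s}}.
\end{align*}
Because $\mathbf{s} < 1/2$, this bound is finite, and one can absorb the factor $50\cdot 2^{\mathbf{s}} \leq 100$ cleanly.

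Substituting back gives
\begin{align*}
\sum_{S\in S_k}\mu^{\mathbf{s}}(S) \leq \frac{2^{\mathbf{s}}}{a_k^{\mathbf{s}}} \cdot 100^{k-1}\prod_{i=1}^{k-1} a_i^{1-2\mathbf{s}}.
\end{align*}
It remains to verify this is at most $1/k$, which is precisely where the delicate definition $a_k = 2\bigl(k\prod_{i=1}^{k-1} 100 a_i\bigr)^{1/\mathbf{s}}$ enters: raising to the $\mathbf{s}$-th power yields $a_k^{\mathbf{s}} = 2^{\mathbf{s}} k \cdot 100^{k-1}\prod_{i=1}^{k-1} a_i$, and after cancellation the desired inequality reduces to
\begin{align*}
\prod_{i=1}^{k-1} a_i^{2\mathbf{s}} \geq 1,
\end{align*}
which is trivial since each $a_i \geq 1$.

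The only real obstacle is keeping the exponents straight when unwinding the recursive definition of $a_k$; the bound on the inner sum is crude but, thanks to the massive growth built into $a_k$ via the factor $\bigl(k\prod 100 a_i\bigr)^{1/\mathbf{s}}$, there is ample slack. I do not expect any technical issues beyond bookkeeping.
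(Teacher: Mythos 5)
Your proposal is correct and follows essentially the same route as the paper: bound the Lebesgue measure of each element of $S_k$ via Lemma \ref{lem:KraaikampExtended2}, account for the at most $\prod_{i=1}^{k-1} 50 a_i$ elements, and finish using the recursive definition of $a_k$. Your coordinate-wise factorisation of the sum is just a reorganisation of the paper's ``count times maximal measure'' estimate, and the final cancellation (leaving $1/(k\prod_i a_i^{2\mathbf{s}}) \leq 1/k$) matches the paper's conclusion.
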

\begin{proof}
	The largest element in $ S_k$ is $I = [(a_1 \dots a_{k-1}, a_k),[(a_1
	\dots a_{k-1}, b_k)]$. The number of elements in $S_k$ equals
	$\prod_{i = 1}^{k-1} (b_{i} - a_{i})$. Additionally, we have $b_i =
	50 a_i$ for all $i \in \N$. Therefore,
	$$\sum\limits_{S \in S_k} \mu^{\mathbf{s}}(S) \leq  \mu^{\mathbf{s}}(I)  \prod\limits_{i =
		1}^{k-1} 50 a_{i}.$$
	
	From Lemma \ref{lem:KraaikampExtended2}, it follows that $\mu(I) \leq \frac{2}{a_k}\; \prod\limits_{i = 1}^{k-1} \frac{2}{a_i \; (a_i + 1)}$. Therefore,
	\begin{align*}
		\sum\limits_{S \in S_k} \mu^{\mathbf{s}}(S) &\leq \Big( \frac{2}{a_k}\; \prod\limits_{i = 1}^{k-1} \frac{2}{a_i^2 \;} \Big)^{\mathbf{s}}  \;\; \prod\limits_{i = 1}^{k-1}  (50  a_{i})\\
		& \leq \frac{2^{\mathbf{s}}}{a_k^{\mathbf{s}}}  \;\; \prod\limits_{i =
			1}^{k-1} 100 a_i
	\end{align*}
	Since $a_k = 2 (k \prod\limits_{i = 1}^{k-1} 100 a_i)^{1/{\mathbf{s}}} $,
	this value is less than $1/k$.
\end{proof}

To show that the constructive dimension of $\mathcal{F}$ is less than $\mathbf{s}$, we construct a lower
semicomputable binary $\mathbf{s}$-gale that succeeds on $\mathcal{F}$.
Using standard techniques, we first convert the covers obtained in Lemma \ref{lem:smeasureF0} to a set of binary covers of $\mathcal{F}$. Finally applying Lemma \ref{lem:scoverTosgale}, we convert the binary covers into a semicomputable $\mathbf{s}$-gale that succeeds on $\mathcal{F}$.

\begin{lemma} \label{lem:cdimFlessthns}
	$\cdim(\mathcal{F}) \leq \mathbf{s}.$
\end{lemma}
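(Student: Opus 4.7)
The plan is to convert the continued fraction covers $S_k$ from Lemma \ref{lem:smeasureF0} into prefix-free binary covers of $\mathcal{F}$ with vanishing $\mathbf{s}$-dimensional mass, and then invoke Lemma \ref{lem:scoverTosgale} to obtain a lower semicomputable binary $\mathbf{s}$-gale that succeeds on $\mathcal{F}$.

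First I would observe that for each $v \in \mathcal{F}_{k-1}$, the set $S_v = \bigcup_{i=a_k}^{b_k} [v,i]$ is a single subinterval of $[0,1]$ because the continued fraction cylinders $[v,i]$ with $i$ ranging over consecutive integers are adjacent subintervals of $C_v$. Hence by Lemma \ref{lem:binary_covering_cf}, $S_v$ can be covered by at most two dyadic binary cylinders $w_1(v), w_2(v)$ with $\mu(w_j(v)) \leq 2\,\mu(S_v)$. Let $B_k = \{w_1(v), w_2(v) \mid v \in \mathcal{F}_{k-1}\}$; this is a binary cover of $\mathcal{F}$, and by Lemma \ref{lem:smeasureF0},
\begin{align*}
\sum_{w \in B_k} \mu^{\mathbf{s}}(w) \;\leq\; 2 \cdot 2^{\mathbf{s}} \sum_{S \in S_k} \mu^{\mathbf{s}}(S) \;\leq\; \frac{2^{1+\mathbf{s}}}{k}.
\end{align*}
Each $B_k$ can be made prefix-free by removing any string that properly extends another; this only decreases the sum above.

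Next, given any $n \in \N$, choose $k = k(n)$ computably so that $2^{1+\mathbf{s}}/k < 2^{-n}$ (any $k > 2^{n+2}$ works since $\mathbf{s} < 1$); then the resulting prefix-free binary cover $B^n := B_{k(n)}$ of $\mathcal{F}$ satisfies $\sum_{w \in B^n} \mu^{\mathbf{s}}(w) < 2^{-n}$. Since the sequence $(a_k)$ is computable, the finite set $\mathcal{F}_{k-1}$ is computable, and the endpoints of each $S_v$ together with the covering dyadic cylinders $w_1(v), w_2(v)$ are computable from $v$; hence the family $\{B^n\}_{n \in \N}$ is uniformly computably enumerable.

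Finally, applying Lemma \ref{lem:scoverTosgale} to this family yields a lower semicomputable binary $\mathbf{s}$-gale that succeeds on $\mathcal{F}$, so $\cdim(\mathcal{F}) \leq \mathbf{s}$. The main thing to verify carefully is that the union-of-cylinders set $S_v$ is genuinely an interval (so that Lemma \ref{lem:binary_covering_cf} applies with only two dyadic pieces) and that the indexing of the covers remains uniformly computable; both follow from the structure of continued fraction cylinders and the explicit recursion defining $a_k$.
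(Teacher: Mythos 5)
Your proposal is correct and follows essentially the same route as the paper: cover each union $\bigcup_{i=a_k}^{b_k}[v,i]$ by two dyadic cylinders via Lemma \ref{lem:binary_covering_cf}, bound the $\mathbf{s}$-mass by $2^{1+\mathbf{s}}/k$ using Lemma \ref{lem:smeasureF0}, prune to a prefix-free computable cover, choose $k$ as a computable function of $n$ to make the mass below $2^{-n}$, and conclude with Lemma \ref{lem:scoverTosgale}. Your explicit check that each $S_v$ is an interval (so two dyadic pieces suffice) is a detail the paper leaves implicit, but otherwise the arguments coincide.
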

\begin{proof}
	Given $k\in\N$, from Lemma \ref{lem:smeasureF0}, we have that for $S_k
	= \{\ \bigcup_{i = a_k}^{b_k} \; [v,i] \text{ for } v \in
	\mathcal{F}_{k-1}\}$, $ \sum_{S \in S_k} \mu^{\mathbf{s}}(S) \leq
	1/k$. For each $S \in S_k$, using Lemma
	$\ref{lem:binary_covering_cf}$, we get that for the two smallest
	consecutive binary cylinders say $b_1(S)$ and $b_2(S)$ that cover $S$,
	we have that $\mu(b_1) = \mu(b_2) \le 2 \mu(C)$.
	
	Hence the set $B_k = \{\{b_1(S)\} \cup \{b_2(S)\} \text{ such that } S \in S_k\}$ forms a binary cover of $S_k$. 
	Also from Lemma \ref{lem:binary_covering_cf}, we have that $\sum_{b
		\in B_k} \mu^{\mathbf{s}}(b) \leq 2^{1+{\mathbf{s}}} \sum_{S \in
		S_k} \mu^{\mathbf{s}}(S) \leq 2^{1+{\mathbf{s}}}/k$.

	Note that the set $S_k$ is computable as $a_k$ and $b_k$ are
	computable for all $k$. Given any interval $S$, $b_1(S)$ and $b_2(S)$ are also computable. Hence the set $B_k$ is computable.
	
	Since $B_k$ is a finite set, we can remove all $v \in B_k$ such that $u \sqsubset v$ for some $u \in B_k$, to make $B_k$ prefix free.
	
	For an $n \in \N$, taking $k = \lceil 2^{1 + \mathbf{s}} .
	2^n\rceil$, the set $B_{k}$ forms a computably enumerable prefix free binary cover of
	$\mathcal{F}$ such that $\sum_{b \in B_k} \mu^{\mathbf{s}}(b) \leq
	2^{-n}$.
	
	Applying Lemma \ref{lem:scoverTosgale}, we get that there exists a
	lower semicomputable $\mathbf{s}$-gale that succeeds on $\mathcal{F}$.
	Hence the lemma holds.
\end{proof}

We sum up the results from Lemma \ref{lem:dimYinCFgt0.5} and Lemma
\ref{lem:cdimFlessthns} into the following theorem.

\begin{theorem}
	Given any $0 <\varepsilon < 0.5$, there exists a $Y \in \N^\infty$ such
	that $\cdim_{CF} (Y) \geq 0.5$ and $\cdim(Y) \leq \varepsilon$.
\end{theorem}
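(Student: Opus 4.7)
The plan is to assemble the theorem directly from the two preceding lemmas, essentially by instantiating the free parameter $\mathbf{s}$ in Definition \ref{def:mathcalF} to the given $\varepsilon$. Concretely, given $0 < \varepsilon < 0.5$, I would set $\mathbf{s} = \varepsilon$ (this is legal since Definition \ref{def:mathcalF} only requires $0 < \mathbf{s} < 0.5$), and then construct the corresponding family $\mathcal{F} = \bigcap_{k=1}^\infty \mathcal{F}_k$ using the inductively defined sequences $a_k$ and $b_k = 50 a_k$ from that definition.

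The next step is to extract the witness. By Lemma \ref{lem:dimYinCFgt0.5}, there exists some $Y \in \mathcal{F}$ with $\cdim_{CF}(Y) \geq 0.5$. This $Y$ is the candidate real for the theorem. It then remains to bound $\cdim(Y)$ from above by $\varepsilon$. For this, Lemma \ref{lem:cdimFlessthns} (applied with the chosen $\mathbf{s} = \varepsilon$) gives $\cdim(\mathcal{F}) \leq \varepsilon$.

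To finish, I invoke monotonicity of constructive dimension under set inclusion: since $\{Y\} \subseteq \mathcal{F}$, any lower semicomputable binary $s$-gale that succeeds on all elements of $\mathcal{F}$ also succeeds on $Y$, so $\mathcal{G}(\mathcal{F}) \subseteq \mathcal{G}(\{Y\})$ and hence $\cdim(Y) = \cdim(\{Y\}) \leq \cdim(\mathcal{F}) \leq \varepsilon$. Combining with the earlier bound gives both $\cdim_{CF}(Y) \geq 0.5$ and $\cdim(Y) \leq \varepsilon$, as required.

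Since all the technical work (namely, the non-faithfulness-of-covers argument producing the binary cover bound in Lemma \ref{lem:smeasureF0}, and the diagonalization against the optimal $s$-gale $d^*$ from Section \ref{sec:optSgale} in Lemma \ref{lem:dimYinCFgt0.5}) is already isolated in the lemmas, there is no real obstacle to overcome in this final step. The only thing to double-check is that the parameter $\mathbf{s}$ in Definition \ref{def:mathcalF} can be taken to equal $\varepsilon$ rather than strictly less than it; this is immediate from the constraint $0 < \mathbf{s} < 0.5$ and the hypothesis $0 < \varepsilon < 0.5$, so no $\varepsilon$-slack is needed and the argument collapses to a few lines.
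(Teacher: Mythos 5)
Your proposal is correct and matches the paper's own proof: both instantiate $\mathbf{s} = \varepsilon$ in Definition \ref{def:mathcalF}, extract the witness $Y$ from Lemma \ref{lem:dimYinCFgt0.5}, and bound $\cdim(Y) \leq \varepsilon$ via Lemma \ref{lem:cdimFlessthns}. Your explicit appeal to monotonicity ($\{Y\} \subseteq \mathcal{F}$ implies $\cdim(Y) \leq \cdim(\mathcal{F})$) is just a slightly more detailed rendering of the paper's observation that the $\mathbf{s}$-gale succeeding on $\mathcal{F}$ succeeds on every point of it.
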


\begin{proof}
	Given $0 < \varepsilon < 0.5$, taking $\mathbf{s} = \varepsilon$,
	construct the set $\mathcal{F}$ given in Definition
	\ref{def:mathcalF}.

	From Lemma \ref{lem:dimYinCFgt0.5}, it follows that there exists a $Y
	\in \mathcal{F}$ such that $\cdim_{CF}(Y) \geq 0.5 $.
	
	From Lemma \ref{lem:cdimFlessthns}, it follows that for all $X \in \F$,
	$cdim(X) \leq \varepsilon$. Hence $cdim(Y) \leq \varepsilon$.

\end{proof}

\bibliographystyle{plain} 
\bibliography{main.new}

\end{document}